\newcommand{\bl}[1]{\textcolor{blue}{#1}}
\definecolor{mypurple}{rgb}{.4,.0,.5}
\newcommand{\prp}[1]{\textcolor{mypurple}{#1}}
\def\y{{\bf y}}
\def\x{{\bf x}}
\def\x{{\mathbf x}}
\def\u{{\bf u}}
\def\x{{\bf x}}
\def\y{{\bf y}}
\def\q{{\bf q}}
\def\m{{\bf m}}
\def\c{{\bf c}}
\def\h{{\bf h}}
\def\cH{{\mathcal H}}
\def\be{\begin{equation}}
\def\ee{\end{equation}}
\def\ba{\left[\begin{array}}
\def\ea{\end{array}\right]}
\def\t{{\bf t}}
\def\u{{\bf u}}
\def\x{{\bf x}}
\def\y{{\bf y}}
\def\q{{\bf q}}
\def\c{{\bf c}}
\def\p{{\bf p}}
\def\1{{\bf 1}}
\def\0{{\bf 0}}
\def\calX{{\cal X}}
\def\calY{{\cal Y}}
\def\mR{{\mathbb R}}
\def\mN{{\mathbb N}}
\def\mE{{\mathbb E}}
\def\mS{{\mathbb S}}
\def\mB{{\mathbb B}}
\def\lp{\left (}
\def\rp{\right )}
\def\y{{\bf y}}
\def\x{{\bf x}}
\def\x{{\mathbf x}}
\def\u{{\bf u}}
\def\x{{\bf x}}
\def\y{{\bf y}}
\def\q{{\bf q}}
\def\c{{\bf c}}
\def\h{{\bf h}}
\def\cH{{\cal H}}
\def\be{\begin{equation}}
\def\ee{\end{equation}}
\def\ba{\left[\begin{array}}
\def\ea{\end{array}\right]}
\def\t{{\bf t}}
\def\u{{\bf u}}
\def\x{{\bf x}}
\def\y{{\bf y}}
\def\q{{\bf q}}
\def\c{{\bf c}}
\def\p{{\bf p}}
\def\({\left (}
\def\){\right )}
\def\1{{\bf 1}}
\def\m{{\bf m}}
\def\q{{\bf q}}
\def\0{{\bf 0}}
\def\cX{{\mathcal X}}
\def\cY{{\mathcal Y}}
\definecolor{darkgreen}{rgb}{0, 0.4,0}
\definecolor{purplebrown}{rgb}{0.5,0.1,0.6}
\definecolor{ultclupcol}{rgb}{0.1,0.5,0.5}
\definecolor{mytrycolor}{rgb}{0.5,0.7,0.2}
\definecolor{ultclupcola}{rgb}{.5,0,.5}
\definecolor{shadebrown}{rgb}{0.1,0.1,0.9}
\definecolor{lightblue}{rgb}{0.2,0,1}
\newtcbox{\xmybox}{on line,
arc=7pt,
before upper={\rule[-3pt]{0pt}{10pt}},boxrule=0pt,
boxsep=0pt,left=6pt,right=6pt,top=0pt,bottom=0pt,enhanced, coltext=blue, colback=white!10!yellow}
\newtcbox{\xmyboxa}{on line,
arc=7pt,
before upper={\rule[-3pt]{0pt}{10pt}},boxrule=0pt,
boxsep=0pt,left=6pt,right=6pt,top=0pt,bottom=0pt,enhanced, colback=white!10!yellow}
\newtcbox{\xmyboxb}{on line,
arc=7pt,
before upper={\rule[-3pt]{0pt}{10pt}},boxrule=1pt,colframe=darkgreen!100!blue,
boxsep=0pt,left=6pt,right=6pt,top=0pt,bottom=0pt,enhanced, colback=white!10!yellow}
\newtcbox{\xmyboxc}{on line,
arc=7pt,
before upper={\rule[-3pt]{0pt}{10pt}},boxrule=.7pt,colframe=blue!100!blue,
boxsep=0pt,left=6pt,right=6pt,top=0pt,bottom=0pt,enhanced, coltext=blue, colback=white!10!yellow}
\newtcbox{\xmytboxa}{on line,
arc=7pt,
before upper={\rule[-3pt]{0pt}{10pt}},boxrule=.0pt,colframe=pink!50!yellow,
boxsep=0pt,left=6pt,right=6pt,top=0pt,bottom=0pt,enhanced, coltext=white, colback=blue!40!red}
\newtcbox{\xmytboxb}{on line,
arc=7pt,
before upper={\rule[-3pt]{0pt}{10pt}},boxrule=.0pt,colframe=pink!50!yellow,
boxsep=0pt,left=6pt,right=6pt,top=0pt,bottom=0pt,enhanced, coltext=white, colback=white!40!green}
\newcommand\subsubsubsection{\@startsection{paragraph}{4}{\z@}{-2.5ex\@plus -1ex \@minus -.25ex}{1.25ex \@plus .25ex}{\normalfont\normalsize\bfseries}}
\newcommand\subsubsubsubsection{\@startsection{subparagraph}{5}{\z@}{-2.5ex\@plus -1ex \@minus -.25ex}{1.25ex \@plus .25ex}{\normalfont\normalsize\bfseries}}
\newtheorem{theorem}{Theorem}
\begin{document}

\begin{singlespace}

\title {CLuP practically achieves $\sim 1.77$ positive and $\sim 0.33$ negative Hopfield model ground state free energy 
}
\author{
\textsc{Mihailo Stojnic
\footnote{e-mail: {\tt flatoyer@gmail.com}} }}
\date{}
\maketitle

\centerline{{\bf Abstract}} \vspace*{0.1in}

We study algorithmic aspects of finding $n$-dimensional  \emph{positive} and \emph{negative} Hopfield ($\pm$Hop) model ground state free energies. This corresponds  to classical maximization of random positive/negative semi-definite quadratic forms over binary $\left \{\pm \frac{1}{\sqrt{n}} \right \}^n$ vectors. The key algorithmic question is whether these problems can be computationally efficiently approximated within a factor $\approx 1$.  Following the introduction and success of \emph{Controlled Loosening-up} (CLuP-SK) algorithms in finding near  ground state energies of closely related Sherrington-Kirkpatrick (SK) models \cite{Stojnicclupsk25}, we here propose a CLuP$\pm$Hop counterparts for $\pm$Hop models.  Fully lifted random duality theory (fl RDT) \cite{Stojnicflrdt23} is utilized to characterize  CLuP$\pm$Hop \emph{typical} dynamics. An excellent agreement between practical performance  and theoretical predictions is observed. In particular, for $n$  as small as few thousands CLuP$\pm$Hop achieve $\sim 1.77$ and  $\sim 0.33$ as the ground state free energies of the positive and negative Hopfield models. At the same time we obtain on the 6th level of lifting (6-spl RDT) corresponding theoretical thermodynamic ($n\rightarrow\infty$) limits $\approx 1.7784$ and $\approx 0.3281$. This positions determining Hopfield models near ground state energies as \emph{typically} easy problems. Moreover, the very same 6th lifting level evaluations allow to uncover a fundamental intrinsic difference between two models: $+$Hop's near optimal configurations are \emph{typically close} to each other whereas the $-$Hop's are \emph{typically far away}.

\vspace*{0.25in} \noindent {\bf Index Terms: Hopfield models; CLuP algorithm; Fully lifted random duality theory}.

\end{singlespace}

\section{Introduction}
\label{sec:back}

Taking an integer $n\in\mN$ and matrix $A\in\mR^{n\times n}$  we are interested in the  following quadratic form optimization
\begin{eqnarray}\label{eq:inteq1}
 \max_{\x\in\mB^n} \x^TA\x,
\end{eqnarray}
where $\mB^n$ are the vertices of the following  $n$-dimensional ``binary'' cube
\begin{eqnarray}\label{eq:inteq2}
\mB^n \triangleq \left \{ \x| \x\in\mR^n, \x_i^2=\frac{1}{n}\right \}.
\end{eqnarray}
Being one of the  foundational problems of the classical NP complexity theory, (\ref{eq:inteq1}) is hard to approximate within a $\log(n)^{const.}$ factor \cite{AroraBKSH05} (for further  considerations including relaxations of integer constraints, see, e.g., \cite{CharikarW04,Meg01}). While NP concepts are a useful generic computational complexity guideline, their \emph{worst case} nature might leave them somewhat inconvenient when it comes to  proper addressing of \emph{typical} solvability. There indeed may be few bad instances of (\ref{eq:inteq1}) that are unsolvable in polynomial time, but there is no guarantee that the same scenario repeats itself for a sizeable portion (or even less so for a majority) of all instances. In other words, hardness of the worst case does not say much about \emph{typical} solvability. For example, the subset of instance of (\ref{eq:inteq1}) where $A$ is  positive semi-definite ($A\succeq 0$) is approximable within constant $\frac{2}{\pi}$ factor \cite{Nest97}. Things even more dramatically change as one gets to random mediums.  For (indefinite) $A$ comprised of standard normals (\ref{eq:inteq1}) becomes
famous Sherrington-Kirkpatrick (SK) model \cite{SheKir72} where simple spectral methods ensure (probabilistic) approximability within a constant factor thereby strongly improving over the above mentioned $\log(n)^{const.}$ inaapproximability.  Moreover, recent studies \cite{Montanari19,Stojnicclupsk25} move things much further and effectively uncover determining SK-models ground state energies as easy problems.

Following into the footsteps of  \cite{Stojnicclupsk25} we here consider \emph{positive} and \emph{negative} Hopfield ($\pm$Hop) models whose ground state energies are obtained for positive and negative semi-definite forms $A$. As SK  models, Hopfiled models have comparably long history \cite{Hop82,PasFig78,Hebb49,PasShchTir94,ShchTir93,BarGenGueTan10,BarGenGueTan12,Tal98,StojnicMoreSophHopBnds10,BovGay98,Zhao11,Talbook11a,Talbook11b,Stojnicnflgscompyx23,Stojnichopflrdt23}. Starting out as concepts of cognitive learning \cite{Hop82,Hebb49} they quickly became integral parts of studies in many different areas including physics and mathematics  \cite{Hop82,PasFig78,PasShchTir94,ShchTir93,KroHop16,BarGenGueTan10,BarGenGueTan12,Tal98,StojnicMoreSophHopBnds10,BovGay98,Zhao11,Talbook11a,Talbook11b,Stojnichopflrdt23}, statistics, signal processing, algorithmic computer science
\cite{CharikarW04,Meg01,Nest97,StojnicISIT2010binary,Stojnicclupint19,Stojnicclupspreg20,BayMon10,BayMon10lasso,StojnicCSetam09,StojnicGenLasso10},
and most notably neural networks
\cite{AGS87,AmiGutSom87,KroHop16,Rametal21,Stojnichebhop24,Newman88,MPRV87,FST00,Louk94a,Louk94,Louk97,Hop82,PasFig78,Hebb49}. Almost all of these applications directly relate to various forms of $\pm$Hop models ground state energies and impose finding ways to efficiently compute them as a rather pressing need.

Motivated by the success of the \emph{Controlled Loosening-up} (CLuP) algorithms in handling so-called \emph{planted} models in compressed sensing, statistical regression, and MIMO ML detections \cite{Stojnicclupint19,Stojnicclupspreg20},  \cite{Stojnicclupsk25} proposed  CLuP-SK counterpart for handling \emph{non-planted} SK models.  Building up on \cite{Stojnicclupsk25}, we here propose a CLuP$\pm$Hop alternative for  $\pm$Hop models. Somewhat paradoxically, keeping in mind that \cite{Stojnicclupint19,Stojnicclupspreg20}'s applications are actually of the (negative) Hopfield type, we in a way reconnect  CLuP back to its origin. However,  this reconnection is now way more challenging as  \emph{non-planted} scenarios are algorithmically much harder. Before we switch to technical intricacies that shed more light on this, we briefly introduce  needed mathematical basics and  overview some of the most relevant prior results.

\section{Hopfield ($\pm$Hop) models --- basics, prior work, and contributions}
\label{sec:skmodel}

Form in (\ref{eq:inteq1}) is a special case of a  general statistical mechanics concept. For matrix of quenched interactions $G$, consider the so-called Hamiltonian
\begin{equation}
\cH(G)=\sum_{i=1}^{n} \sum_{j=1}^{m} G_{ij}\x_i\y_j = \y^TG\x,\label{eq:a0ham1}
\end{equation}
and for the \emph{inverse temperature} parameter $\beta>0$ and the $m$-dimensional unit norm sphere $\mS^m$ associate to it the following  partition function
\begin{equation}
Z(\beta,G)=\sum_{\x\in\mB^n} \lp \sum_{\y\in\mS^m}  e^{\beta\cH(G)} \rp^s .\label{eq:a0partfun}
\end{equation}
Throughout the paper we are interested in random $\cH(G)$ with elements of $G$ being independent standard normals. We consider so-called linear/proportional regimes with $\alpha = \lim_{n\rightarrow\infty} \frac{m}{n}$ remaining constant as $n$ grows. One then has for
  the thermodynamic limit ($n\rightarrow\infty$) (average) free energy
\begin{equation}
f_{sq}(\beta)=\lim_{n\rightarrow\infty}\frac{\mE_G\log{(Z(\beta,G)})}{\beta \sqrt{n}}
=\lim_{n\rightarrow\infty} \frac{\mE_G\log \lp  \sum_{\x\in\mB^n} \lp \sum_{\y\in\mS^m}  e^{\beta\cH(G)} \rp^s  \rp    }{\beta \sqrt{n}},\label{eq:a0logpartfunsqrt}
\end{equation}
with $\mE_G$ standing for the expectation with respect to $G$ (as a general notational convention,  subscripts of $\mE$ denote the randomness for which the expectation is evaluated). Equipped with all the technicalities we can now introduce the ground state free energy as the following zero-temperature,
$\beta\rightarrow\infty$, thermodynamic limit
\begin{eqnarray}
f_{sq}(\infty)  & \triangleq  & \lim_{\beta\rightarrow\infty}f_{sq}(\beta)  =
\lim_{\beta,n\rightarrow\infty}\frac{\mE_G\log{(Z(\beta,G)})}{\beta \sqrt{n}}=
 \lim_{n\rightarrow\infty}\frac{\mE_G \max_{\x\in\mB^n} s \max_{\y\in\mS^m} \cH(G)}{\sqrt{n}}
  \nonumber \\
&  =  & \lim_{n\rightarrow\infty}\frac{\mE_G \max_{\x\in\mB^n} s  \max_{\y\in\mS^m} \y^TG\x}{ \sqrt{n}}.
  \label{eq:a0limlogpartfunsqrt}
\end{eqnarray}
In particular, we have  for $s=1$
\begin{eqnarray}
f_{sq}^+(\infty)  &= &
 \lim_{n\rightarrow\infty}\frac{\mE_G \max_{\x\in\mB^n}  \max_{\y\in\mS^n}  \cH(G)}{\sqrt{n}}  = \lim_{n\rightarrow\infty}\frac{\mE_G \max_{\x\in\mB^n} \max_{\x\in\mS^m}\y^TG\x}{\sqrt{n}}
 \nonumber \\
& = & \lim_{n\rightarrow\infty}\frac{\mE_G \max_{\x\in\mB^n} \sqrt{\x^TG^T G\x}} {\sqrt{n}},
  \label{eq:a0limlogpartfunsqrta0}
\end{eqnarray}
and for $s=-1$
\begin{eqnarray}
f_{sq}^- (\infty)  &= &
 \lim_{n\rightarrow\infty}\frac{\mE_G \max_{\x\in\mB^n} - \max_{\y\in\mS^n}  \cH(G)}{\sqrt{n}}  = \lim_{n\rightarrow\infty}\frac{\mE_G \max_{\x\in\mB^n} - \max_{\x\in\mS^m}  \y^TG\x}{\sqrt{n}}
 \nonumber \\
& = & \lim_{n\rightarrow\infty}\frac{\mE_G \max_{\x\in\mB^n} - \sqrt{\x^TG^T G\x}} {\sqrt{n}}
 = - \lim_{n\rightarrow\infty}\frac{\mE_G \min_{\x\in\mB^n}  \sqrt{\x^TG^T G\x}} {\sqrt{n}}.
  \label{eq:a0limlogpartfunsqrta1}
\end{eqnarray}
It is now not that difficult to see that (\ref{eq:a0limlogpartfunsqrta0}) is computationally equivalent to (\ref{eq:inteq1}) with $A=G^TG$ being positive semi-definite. Analogously, (\ref{eq:a0limlogpartfunsqrta1}) is computationally equivalent to (\ref{eq:inteq1}) with $A=-G^TG$ being negative semi-definite. As precisely such forms of $A$ correspond to \emph{positive} and \emph{negative} Hopfield ($\pm$Hop) models,  the ground state regime and the associated free energies from  (\ref{eq:a0limlogpartfunsqrta0}) and  (\ref{eq:a0limlogpartfunsqrta1}) are our main focus. However,  we find working with the general free energy form from (\ref{eq:a0logpartfunsqrt}) as more convenient. Eventually the ground state $\pm$Hop's behaviors are deduced as  special ($\beta\rightarrow\infty$)  cases of (\ref{eq:a0logpartfunsqrt}).

\subsection{Prior work}
\label{sec:skmodelkey}

\noindent \underline{\emph{\textbf{Theoretical aspects:}}}  Two spin-glass models -- the  Edwards-Anderson (EW) nearest neighbor  \cite{EdwAnd75} and Sherrington-Kirkpatrick (SK) long range antipode \cite{SheKir72} --  proposed in the mid seventies of the last century to a large degree shaped up studies in statistical mechanics and many associated computational and engineering fields over the last 50 years. Analysis of the models via  \cite{EdwAnd75}'s replica method uncovered incredibly rich intrinsic structures.  While it was clear that something big is around the corner, a few unpleasant logical inconsistencies appeared as well. First \cite{SheKir72}  observed the so-called negative entropy crisis and then extensive followup simulations  \cite{SheKir78} further uncovered a non-negligible disagreement between the algorithmically computed and theoretically predicted ground state free energies. Attempting to resolve the mismatches, Parisi in a  breakthrough discovery \cite{Par79,Parisi80,Par80} proposed replacing \cite{EdwAnd75}'s \emph{replica-symmetry} (RS) ansatz with a more refined \emph{replica symmetry breaking} (RSB) one.  He demonstrated that already after applying RSB  with two steps of breaking, the negative entropy crisis almost completely disappears. Moreover, recognizing that RSB corrections are more pronounced as $\beta$ grows he also found that two steps of RSB lower the RS ground state energy prediction $\approx 0.7979$ to $\approx 0.7636$ thereby achieving a stronger  agreement with $\sim 0.75\pm 0.01$ estimate from \cite{SheKir78}. Belief in the Parisi RSB quickly strengthened  and soon after the appearance of \cite{Par79,Parisi80,Par80,Par83} it became an irreplaceable tool in utilization of replica methods. While horizons of applications widened, rigorous analytical justifications lagged behind. About 25 years later,  Guerra  \cite{Guerra03} and Talagrand  \cite{Tal06} proved that Parisi RSB characterization of the SK model is  correct. Panchenko  \cite{Pan10,Pan10a,Pan13,Pan13a} reproved these results and additionally established the validity of the \emph{ultrametricity} \cite{Par83} (for further studies in these and related directions, see, e.g., \cite{AuffC15,AuffCZ20,JagTob16} and references therein).

While  all of the above  settled the theoretical aspects of the SK models, extensions to  more complex models have been noticeably absent. In fact, not only  are they not straightforward, but different methodologies are actually needed. Hopfield models are excellent examples where this can be seen. After their invention within cognitive learning context \cite{Hop82,Hebb49},  they have been connected to spin glass models \cite{PasFig77} and extensively studied within neural networks (NN) communities. Introductory  considerations \cite{Hop82,MPRV87} were followed by \cite{CriAmiGut86,SteKuh94,AmiGutSom87} where free energies were studied for general $\beta$'s via replica methods (some of the problems studied within NN context are actually even more general than the form in (\ref{eq:inteq1})). Early mathematical treatments \cite{PasShchTir94,ShchTir93,BarGenGueTan10,Tal98,Zhao11,BarGenGue11bip} were typically restricted to favorable dimensional and high-temperature ($T$) regimes (low $\beta=\frac{1}{T}$), where the so-called replica symmetric behavior is present. Outside such regimes, the problems are hard and precise studies were usually replaced by bounding efforts \cite{Newman88,Louk94,Louk94a,Louk97,Tal98}. Completely opposite  of  the high-temperature regime, the ground state  assumes zero temperature. Consequently, related available results are even scarcer. Relying on the random duality (RDT) theory, \cite{StojnicHopBnds10}  upper bounded the free energy and showed that such upper bounds precisely match the RS based predictions \cite{DeanRit01}.  Utilizing a \emph{lifted} RDT variant, \cite{StojnicMoreSophHopBnds10}  lowered bounds of \cite{StojnicHopBnds10} and showed that RS predictions are \emph{not} tight implying that symmetry has to be broken. Moreover,  for $+$Hop \cite{StojnicMoreSophHopBnds10} produced analogous lower bounds, effectively giving the ranges $f_{sq}^+(\infty)\in[1.7632,1.7832]$ and $f_{sq}^-(\infty)\leq -0.3202$. Finally,  fully lifted RDT \cite{Stojnicflrdt23} allowed  precise evaluation  of $\pm$Hop ground state free energies. In particular, on the 3rd level of lifting,  \cite{Stojnichopflrdt23} obtained  $f_{sq}^{+,3}(\infty)\approx 1.7789$ and $f_{sq}^{-,3}(\infty)\approx -0.3279$ (in Section \ref{sec:properties} of this paper, we obtain on the 6th level of lifting $f_{sq}^{+,6}(\infty)\approx 1.77842$ and $f_{sq}^{-,6}(\infty)\approx -0.32807$).

\noindent \underline{\emph{\textbf{Algorithmic aspects:}}}  After determining $\pm$Hop models  theoretical limits the natural  algorithmic followup  question is whether spin configurations that achieve such limits can be efficiently  found. As mentioned earlier, within classical NP complexity theory for an indefinite $A$ the optimum in (\ref{eq:inteq1}) is hard to approximate within a $\log(n)^{const.}$ factor \cite{AroraBKSH05}. On the other hand, for any $A\succeq 0$,  semi-definite programming (SDP) relaxations ensure constant factor  $\frac{2}{\pi}$  approximation  \cite{Nest97}. As  NP concepts are \emph{worst case} based they rarely give a proper assessment regarding  \emph{typical} solvability. A more faithful representation in that regard is obtained after one switches to random mediums (this also happens to be a natural setup of $\pm$Hop models). In such context,  \cite{StojnicHopBnds10} obtained that a simple rounding leading eigenvector procedure gives $f_{sq}^+(\infty) = 2\sqrt{\frac{2}{\pi}}$. Combining this with a trivial leading eigenvalue bound  $f_{sq}^+(\infty) \leq 2$ gives $f_{sq}^+(\infty)   \in \left \{\sqrt{\frac{8}{\pi}},2 \right \}$. Somewhat paradoxically, one obtains the approximating factor of $\frac{2}{\pi}$ (it is actually  $\sqrt{\frac{2}{\pi}}$ since $f_{sq}^+(\infty)$ is the root of the optimum in (\ref{eq:inteq1}) for $A\succeq 0$). In other words simple spectral considerations from  \cite{StojnicHopBnds10}  give exactly the same approximating factor in random context that  \cite{Nest97} gives in non-random. As further hinted in \cite{StojnicHopBnds10}, the given range $f_{sq}(\infty)   \in \left \{\sqrt{\frac{8}{\pi}},2 \right \}$ is fairly pessimistic. In fact, another simple iterative procedure presented in  \cite{StojnicHopBnds10} gave similar guarantees but practically performed much better producing  $f_{sq}^+(\infty) \approx 1.70$ which is much closer to $f_{sq}^+(\infty)\approx 1.7784$. On the other hand, analogous procedure for $-$Hop model gave $f_{sq}^-(\infty) \approx -0.55$ which is far away from $f_{sq}^-(\infty)\approx -0.3281$. While all of the above put a strong effort in improving the constant factor approximation, it still left one wonder if the approximating factor can approach 1?  Or in other words, whether or not the Hopfield models exhibit  \emph{computational gaps}? If the approximative factor can be arbitrarily close to 1, then $\pm$Hop models have no computational gaps and are typically easy (for more on similar phenomena, where  problems hard within classical NP theory are actually typically easy, see, e.g., \cite{Gamar21,GamarSud14,GamarSud17,GamarSud17a} and references therein).

We should also mention a few groundbreaking recent results related to similar statistical mechanics models.  A continuous random energy model (CREM) was considered in  \cite{AbM19} and shown to be solvable in polynomial time. The problem/model is somewhat artificial but it is of great use as it is presumed sufficiently similar to SK when it comes to studying algorithmic properties.  Subag in \cite{Subag21} studied $p$-spin spherical models (also closely related to SK) and designed an efficient algorithm to solve them. Partially inspired by \cite{Subag21},  Montanari in \cite{Montanari19} provided analogous breakthrough for SK model itself. Differently from \cite{Subag21},  \cite{Montanari19} relied on  modifications of the message passing algorithms \cite{Kaba03,DonMalMon09,DonohoMM11}. Starting with famous approximate message passing (AMP)  \cite{DonMalMon09},  \cite{Montanari19}  considers its  an incremental variant, IAMP. and shows that it can compute arbitrarily closely SK ground state energy provided that the Parisi RSB parametric functional is monotonically increasing (or equivalently that the overlap gap property (OGP)  is absent \cite{Gamar21,GamarSud14,GamarSud17,GamarSud17a,AchlioptasCR11,HMMZ08,MMZ05}). Despite missing the formal proofs,  these properties are commonly believed to be true (for further extensions in these and closely related directions see also \cite{AlaouiMS22,AlaouiMS25,HuangS22,Subag24,Subag17,Subag17a}). It would be very interesting to see if the results from \cite{Montanari19} (and \cite{Subag21}) can be extended to Hopfield models of interest here.

Following the success of \emph{Control Loosening-up} (CLuP) algorithms in planted models in statistics and signal processing \cite{Stojnicclupint19,Stojnicclupspreg20}, \cite{Stojnicclupsk25} introduced a CLuP-SK algorithm tailored for non-planted SK models. Its practical performance was shown to closely match corresponding theoretical predictions. Moreover, already for the dimensions on the order of few thousands it achieves $\sim 76$ SK model ground state energy which remarkably closely approaches the thermodynamic $n\rightarrow\infty$ limit of $\approx 0.7632$.

\subsection{Our contributions}
\label{sec:cont}

Given that CluP-SK  \cite{Stojnicclupsk25}  provided a nice non-planted model complement for original planted models applications \cite{Stojnicclupint19,Stojnicclupspreg20}, it is natural to wonder if  the range of non-planted scenarios solvable by CLuP like algorithms can be further extended. Focusing on Hopfield models as another class of non-planted models, we propose  the following implementation particularly tailored for them
 \begin{eqnarray}\label{eq:algimpeq2}
 \hspace{-.55in} \mbox{\bl{\textbf{\emph{CLuP$\pm$Hop algorithm:}}}}  \hspace{.65in} \x^{(t+1)} & \rightarrow  &
\mbox{\textbf{gradbar}}\lp\bar{f}_{b,x}^{\pm} \lp \x;\bar{t}_{0x}^{(t)} \rp ;\x^{(t)},\bar{t}_{0x}^{(t)} \rp
 \nonumber \\
\bar{t}_{0x}^{(t+1)}  &  \rightarrow  & \bar{c}^{(t)}\bar{t}_{0x}^{(t)}.
\end{eqnarray}
For starting $\x^{(t)}$, procedure $\mbox{\textbf{gradbar}}\lp\bar{f}_{b,x}^+ \lp \x;\bar{t}_{0x}^{(t)} \rp ;\x^{(t)},\bar{t}_{0x}^{(t)} \rp$ applies gradient descent to  function
\begin{eqnarray}\label{eq:algimpeq3}
\bar{f}_{b,x}^+ \lp\x;\bar{t}_{0x}^{(t)}\rp = - \bar{t}_{0x}^{(t)} \|\x\|_2 - \log\lp - \lp \x^T  \lp 4 I  -  \frac{1}{n}  G^TG   \rp \x - \kappa \rp \rp
-\frac{1}{n}\sum_{i=1}^{n}  \log(1-n\x_i^2),
\end{eqnarray}
whereas
$\mbox{\textbf{gradbar}}\lp\bar{f}_{b,x}^- \lp \x;\bar{t}_{0x}^{(t)} \rp ;\x^{(t)},\bar{t}_{0x}^{(t)} \rp$ applies gradient descent to  function
\begin{eqnarray}\label{eq:algimpeq3a0}
\bar{f}_{b,x}^- \lp\x;\bar{t}_{0x}^{(t)}\rp = - \bar{t}_{0x}^{(t)} \|\x\|_2 - \log\lp - \lp \x^T  \lp 0 I  +  \frac{1}{n}  G^TG   \rp \x - \kappa \rp \rp
-\frac{1}{n}\sum_{i=1}^{n}  \log(1-n\x_i^2).
\end{eqnarray}
Clearly, $\bar{f}_{b,x}^+ \lp\x;\bar{t}_{0x}^{(t)}\rp$ relates to $+$Hop and $\bar{f}_{b,x}^- \lp\x;\bar{t}_{0x}^{(t)}\rp$ to $-$Hop model. Also, we found that the above procedure is not overly sensitive to changes of free parameter $\kappa$. Choosing $\kappa=0.855$ for $+$Hop and $\kappa=0.115$ for $-$Hop  worked well in all our numerical experiments. Other parameters are also fairly flexible.  For example, $t_{0x}^{(0)}=0.1$ for $+$Hop and  $t_{0x}^{(0)}=0.001$ for $-$Hop together with $c^{(t)}=1.1$ are good starting options that can be adapted as $n$ changes. $\x^{(0)}$ is practically any $\x$ admissible under $\log$s. Taking a random point from  $\left \{ -\frac{1}{\sqrt{n}},\frac{1}{\sqrt{n}}\right \}^n$ and then scaling down by two until a feasible point is reached is a possible choice. Various stopping criteria can be used as well. One option is to take $\bar{t}_{0x}^{(t+1)} \geq  10^4$.

Let $\hat{\x}$ be the output of CLuP$\pm$Hop. Set
\begin{eqnarray}\label{eq:algimpeq4}
\hat{\xi} \triangleq \frac{\mE_G \sqrt{\mbox{sign}(\x)^TG^TG\mbox{sign}(\x)}  }{n}.
\end{eqnarray}
Table \ref{tab:tab2}  shows the obtained results for $\hat{\xi}$ for both positive and negative Hopfield models. As can be seen even for $n$ as small as  a few thousands one fairly closely approaches the thermodynamic $n\rightarrow\infty$ limits $\approx 1.7784$ and $\approx 0.3281$. It should also be noted that the results in the table are obtained for plain gradient descent without any restarts. In other words, they are obtained without any advanced modifications (say, stochastic gradient descent, multiple restarts with carefully chosen restarting points, interactive  $t_{0x}^{(0)}=0.1$ and $c^{(t)}$ retuning, and so on). The effects of such modifications are likely to fade away as $n\rightarrow \infty$, but for finite $n$  they could be  beneficial. Since the results shown in Table \ref{tab:tab2} are already significantly surpassing the best of the expectations, we found no point in overwhelming presentation with extensive discussions regarding additional modifications.

\begin{table}[h]
\caption{Performance of CLuP$\pm$Hop algorithm; \textbf{\bl{simulated}/theory} }\vspace{.1in}
\centering
\def\arraystretch{1.2}
\begin{tabular}{||l||c||c||c||c|| }\hline\hline
 \hspace{-0in}$n$                                             & $2000$    & $4000$ & $8000$ &  $\infty$ (\textbf{theory}) \\ \hline\hline
$\hat{\xi}$ ($+$Hop)                                         & \bl{$\mathbf{1.7704}$}  & \bl{$\mathbf{1.7721}$}  & \bl{$\mathbf{1.7735}$} & $\mathbf{1.7784}$  \\ \hline\hline
$\hat{\xi}$ ($-$Hop)                                         & \bl{$\mathbf{0.3355}$}  & \bl{$\mathbf{0.3340}$}  & \bl{$\mathbf{0.3330}$} & $\mathbf{0.3281}$  \\ \hline\hline
\end{tabular}
\label{tab:tab2}
\end{table}

To study properties of the above algorithm, we  fix $r_x$ ($0<r_x\leq 1$) and $\bar{r}_x<0$, and introduce the following
\begin{eqnarray}\label{eq:clupskeq1}
\hspace{-1.5in} \mbox{\bl{\textbf{\emph{CLuP$\pm$Hop model:}}}}  \hspace{.5in}  \max_{\x\in\cX(r_x,\bar{r}_x)} \pm\sqrt{\x^TG^TG\x},
\end{eqnarray}
with
\begin{eqnarray}\label{eq:clupskeq1a0}
 \cX(r_x,\bar{r}_x) \triangleq \left \{ \x | \x\in\mR^n,\|\x\|_2=r_x,\x_i^2\leq \frac{1}{n},
 \frac{1}{n}\sum_{i=1}^{n}\log\lp 1-n\x_i^2\rp =\bar{r}_x  \right \}.
\end{eqnarray}
After writing corresponding Hamiltonian and partition function
\begin{equation}
\cH_{chop}(G)= \y^TG\x,\label{eq:cska0ham1}
\end{equation}
and
\begin{equation}
Z_{chop}(\beta,G)=\sum_{\x\in\cX(r_x,\bar{r}_x)} \lp \sum_{\y\in\mS^m} e^{\beta\cH_{chop}(G)} \rp^s ,\label{eq:cska0partfun}
\end{equation}
we have for the thermodynamic limit (average) free energy
\begin{equation}
f_{chop}(\beta)=\lim_{n\rightarrow\infty}\frac{\mE_G\log{(Z_{chop}(\beta,G)})}{\beta \sqrt{n}}
=\lim_{n\rightarrow\infty} \frac{\mE_G\log\lp  \sum_{\x\in\cX(r_x,\bar{r}_x)} \lp \sum_{\y\in\mS^m} e^{\beta\cH_{chop}(G)} \rp^s   \rp  }{\beta \sqrt{n}}.\label{eq:cska0logpartfunsqrt}
\end{equation}
For the  ground state one then writes
\begin{eqnarray}
\xi(r_x,\bar{r}_x) \triangleq f_{chop}(\infty)  & \triangleq  & \lim_{\beta\rightarrow\infty}f_{chop}(\beta)  =
\lim_{\beta,n\rightarrow\infty}\frac{\mE_G\log{(Z_{chop}(\beta,G)})}{\beta \sqrt{n}}
\nonumber \\
& = &
 \lim_{n\rightarrow\infty}\frac{\mE_G \max_{\x\in\cX(r_x,\bar{r}_x)} s \max_{\y\in\mS^m}\cH_{chop}(G)}{\sqrt{n}}
 \nonumber \\
 & = &
 \lim_{n\rightarrow\infty}\frac{\mE_G \max_{\x\in\cX(r_x,\bar{r}_x)} s \max_{\y\in\mS^m}\y^TG\x }{\sqrt{n}}.
  \label{eq:cska0limlogpartfunsqrt}
\end{eqnarray}
Specializing further to $s=1$  gives
\begin{equation}
f_{chop}^+(\beta)
=\lim_{n\rightarrow\infty} \frac{\mE_G\log\lp  \sum_{\x\in\cX(r_x,\bar{r}_x)} \lp \sum_{\y\in\mS^m} e^{\beta\cH_{chop}(G)} \rp   \rp  }{\beta \sqrt{n}},\label{eq:cska0limlogpartfunsqrta0a0}
\end{equation}
and
\begin{eqnarray}
\xi^+(r_x,\bar{r}_x) \triangleq f_{chop}^+(\infty)
=
 \lim_{n\rightarrow\infty}\frac{\mE_G \max_{\x\in\cX(r_x,\bar{r}_x)}  \max_{\y\in\mS^m}\y^TG\x }{\sqrt{n}}
 =
 \lim_{n\rightarrow\infty}\frac{\mE_G \max_{\x\in\cX(r_x,\bar{r}_x)}  \sqrt{\x^TG^TG\x} }{\sqrt{n}}.
  \label{eq:cska0limlogpartfunsqrta0}
\end{eqnarray}
Analogously for $s=-1$
\begin{equation}
f_{chop}^-(\beta)
=\lim_{n\rightarrow\infty} \frac{\mE_G\log\lp  \sum_{\x\in\cX(r_x,\bar{r}_x)} \lp \sum_{\y\in\mS^m} e^{\beta\cH_{chop}(G)} \rp^{-1}  \rp  }{\beta \sqrt{n}},
  \label{eq:cska0limlogpartfunsqrta1a0}
\end{equation}
and
\begin{eqnarray}
\xi^-(r_x,\bar{r}_x) \triangleq f_{chop}^-(\infty)
& = &
 \lim_{n\rightarrow\infty}\frac{\mE_G \max_{\x\in\cX(r_x,\bar{r}_x)} - \max_{\y\in\mS^m}\y^TG\x }{\sqrt{n}}
 \nonumber \\
& = &
 \lim_{n\rightarrow\infty}\frac{\mE_G \max_{\x\in\cX(r_x,\bar{r}_x)} - \sqrt{\x^TG^TG\x} }{\sqrt{n}} =
- \lim_{n\rightarrow\infty}\frac{\mE_G \min_{\x\in\cX(r_x,\bar{r}_x)} \sqrt{\x^TG^TG\x} }{\sqrt{n}}.\nonumber \\
  \label{eq:cska0limlogpartfunsqrta1}
\end{eqnarray}
Recalling on (\ref{eq:algimpeq3})and (\ref{eq:algimpeq3a0}), one then recognizes that for a fixed $t_{0x}$ of key interest for $+$Hop is the  behavior of
\begin{equation}\label{eq:excalgimpeq7}
\bar{f}_{b}^+ \lp r_x,\bar{r}_x \rp = \min_{\x\in\bar{\cX}(r_x,\bar{r}_x)}
  - t_{0x} \|\x\|_2 - \log\lp - \lp \x^T  \lp 4 I  -  \frac{1}{n} G^TG   \rp \x - \kappa \rp \rp
- \frac{1}{n}\sum_{i=1}^{n} \log\lp 1-n\x_i^2  \rp,
\end{equation}
and for $-$Hop the  behavior of
\begin{equation}\label{eq:excalgimpeq7a0}
\bar{f}_{b}^- \lp r_x,\bar{r}_x \rp = \min_{\x\in\bar{\cX}(r_x,\bar{r}_x)}
  - t_{0x} \|\x\|_2 - \log\lp - \lp \x^T  \lp 0 I  +  \frac{1}{n} G^TG   \rp \x - \kappa \rp \rp
- \frac{1}{n}\sum_{i=1}^{n} \log\lp 1-n\x_i^2  \rp.
\end{equation}
Utilizing  (\ref{eq:clupskeq1}), (\ref{eq:cska0limlogpartfunsqrta0}), and   (\ref{eq:cska0limlogpartfunsqrta1}) we then also have
\begin{eqnarray}\label{eq:excalgimpeq8}
\bar{f}_{b}^+ \lp r_x,\bar{r}_x \rp
& = & \min_{\x\in,\bar{\cX}(r_x,\bar{r}_x )} - t_{0x}r_x - \log \lp  -4r_x^2 + \lp\bar{\xi}^+(r_x,\bar{r}_x )\rp^2 + \kappa \rp -\bar{r}_x
\nonumber \\
& = &
 - t_{0x}r_x - \log \lp  -4r_x^2 + \lp f_{chop}^+(\infty) \rp^2  + \kappa \rp - \bar{r}_x .
\end{eqnarray}
and
\begin{eqnarray}\label{eq:excalgimpeq8a0}
\bar{f}_{b}^- \lp r_x,\bar{r}_x \rp
& = & \min_{\x\in,\bar{\cX}(r_x,\bar{r}_x )} - t_{0x}r_x - \log \lp  -0r_x^2 + \lp \bar{\xi}^-(r_x,\bar{r}_x ) \rp^2 + \kappa \rp -\bar{r}_x
\nonumber \\
& = &
 - t_{0x}r_x - \log \lp  -0r_x^2 + \lp f_{chop}^-(\infty) \rp^2  + \kappa \rp - \bar{r}_x .
\end{eqnarray}
The analysis of CLuP$\pm$Hop algorithm  therefore critically depends on $f_{chop}(\infty)$ (i.e., its specialized forms $f_{chop}^+(\infty)$ and $f_{chop}^-(\infty)$). One then quickly recognizes that, in order to properly understand behavior of CLuP$\pm$Hop algorithm, properties of the associated CLuP$\pm$Hop models need to be fully understood as well. On the path to achieving that we establish a plethora of results that are of independent interest as well.

\begin{itemize}

  \item  We first connect the CLuP$\pm$Hop models to recent studies of  random processes \cite{Stojnicsflgscompyx23,Stojnicnflgscompyx23} (see introductory part of Section \ref{sec:randlincons}).

  \item Relying  on \emph{fully lifted} random duality theory (fl RDT) and its a \emph{stationarized}  sfl RDT variant  \cite{Stojnicflrdt23},  we then provide concrete   characterizations of all relevant  CLuP$\pm$Hop models features (see Sections \ref{sec:prac} and  \ref{sec:nuemrical}).

\item  Such  characterizations are connected with the CLuP$\pm$Hop algorithms in Section \ref{sec:concrete}. The algorithms' dynamics are followed through the change of three key features  -- optimal $\bar{f}_b$, $\xi$, and $r_x$. This is highlighted in Figures \ref{fig:fig5a0}-\ref{fig:fig7a0}.

    \item We observe that $\bar{f}_b$ (as a function of $r_x$) exhibits a favorable no local optima behavior which is a necessary condition for generically successful running of descending algorithms (see Figure \ref{fig:fig9a}).

    \item Due to remarkably rapid convergence of the fl RDT,  a majority of quantities of interest are usually sufficiently precisely characterized  already on the second or third level of lifting. One notable exception is characterization of configurational overlaps associated Gibbs measures cdfs. To get anywhere close to their true forms significantly higher levels of lifting need to be considered. Along the same lines, we undertake evaluations up to the 6th level and obtain cdfs that simulations closely approach (see Figures \ref{fig:fig8} and  \ref{fig:fig9} and Tables \ref{tab:2rsbunifiedsqrtpos} and \ref{tab:NEG2rsbunifiedsqrtpos}).

\item We uncover a fundamental intrinsic difference between $+$Hop and $-$Hop models. Typical near optimal configurations overlaps are well aligned for $+$Hop and almost orthogonal for $-$Hop model (see Figures \ref{fig:fig8} and  \ref{fig:fig9}).

\item The very same 6th level evaluations allow to obtain  $f_{sq}^{+,6}(\infty)\approx 1.77842$ and $f_{sq}^{-,6}(\infty)\approx -0.32807$) as $\pm$Hop model ground state free energies (see  Tables \ref{tab:2rsbunifiedsqrtpos} and \ref{tab:NEG2rsbunifiedsqrtpos}).

\item To get a better understanding as to how features of $\pm$Hop and SK models compare to each other, in Table \ref{tab:7rsbSK} and Figure \ref{fig:fig10} we  show the progress of the SK lifting mechanism up to the 7th level. We observe that overlaps of the SK model behave differently from $-$Hop model and fairly similarly to $+$Hop. Interestingly, the SK overlap cdf seems to be higher then the simple approximation given in \cite{OppShe05}. On the other hand, we obtain $f_{sq}^{(7)}(\infty)\approx 0.76319$ as the SK model ground state free energy on the 7th lifting level. This indicates that $0.76321\pm 0.00003$  prediction of  \cite{CrisRizo02} and $\approx0.76317$ prediction of    \cite{OppSch08,OppSS07} are indeed close to the true value.

\item We also practically run CLuP$\pm$Hop and show that its simulated performance  matches remarkably well obtained theoretical predictions. In particular, we observe excellent convergence and concentration properties (see Figures \ref{fig:fig5a0}-\ref{fig:figconc}).  Finally, as Table \ref{tab:tab2}
shows, for $n$ as small as few thousands CLuP$\pm$Hop  practically achieves $\pm$Hop  ground state free energies $\sim 1.774$ and $\sim 0.333$, which  very closely approaches correspo0nding  $n\rightarrow\infty$ thermodynamic limits $\approx 1.7784$ and $\approx 0.3281$. For all practical purposes this basically renders computation of $\pm$Hop models ground state free energies as \emph{typically} easy problems.

\end{itemize}

\section{Connecting CLuP$\pm$Hop model and sfl RDT}
\label{sec:randlincons}

We first note that
\begin{eqnarray}
f_{chop}(\beta) & = &\lim_{n\rightarrow\infty} \frac{\mE_G\log\lp \sum_{\x\in\cX(r_x,\bar{r}_x)} \lp  \sum_{\y\in\mS^m} e^{\beta\y^TG\x}  \rp^s  \rp } {\beta \sqrt{n}}\label{eq:hmsfl1}
\end{eqnarray}
is a function of bilinearly indexed random process  (blirp) $\y^TG\x$. Given that the machinery of  \cite{Stojnicsflgscompyx23,Stojnicnflgscompyx23} provides powerful blirps comparative mechanisms, we would like to connect  $f_{chop}(\beta)$ to it. In order to establish such a connection  several technical preliminaries are needed. For fixed positive real scalar $r_x$ and $\bar{r}_x$, consider sets $\cX(r_x,\bar{r}_x)\subseteq \mR^n$ and  $\cY\triangleq\mS^m$ (with $\mS^m$ be the unit sphere in $\mR^m$). Let $f_S(\cdot):\mR^n\rightarrow R$ be a given function and for $r\in\mN$, $k\in\{1,2,\dots,r+1\}$ let vectors $\p=[\p_0,\p_1,\dots,\p_{r+1}]$, $\q=[\q_0,\q_1,\dots,\q_{r+1}]$, and $\c=[\c_0,\c_1,\dots,\c_{r+1}]$ be such that
 \begin{eqnarray}\label{eq:hmsfl2}
1=\p_0\geq \p_1\geq \p_2\geq \dots \geq \p_r\geq \p_{r+1} & = & 0
\nonumber \\
1=\q_0\geq \q_1\geq \q_2\geq \dots \geq \q_r\geq \q_{r+1} & = & 0,
 \end{eqnarray}
$\c_0=1$, $\c_{r+1}=0$. Also let the elements of  $u^{(4,k)}\in\mR$, $\u^{(2,k)}\in\mR^m$, and $\h^{(k)}\in\mR^n$ be independent standard normals and set ${\mathcal U}_k\triangleq [u^{(4,k)},\u^{(2,k)},\h^{(k)}]$ . Moreover, set
  \begin{eqnarray}\label{eq:fl4}
\psi_{S,\infty}(f_{S},\calX,\calY,\p,\q,\c,r_x,\bar{r}_x)  \triangleq
 \mE_{G,{\mathcal U}_{r+1}} \frac{1}{n\c_r} \log
\lp \mE_{{\mathcal U}_{r}} \lp \dots \lp \mE_{{\mathcal U}_2}\lp\lp\mE_{{\mathcal U}_1} \lp \lp Z_{S,\infty}\rp^{\c_2}\rp\rp^{\frac{\c_3}{\c_2}}\rp\rp^{\frac{\c_4}{\c_3}} \dots \rp^{\frac{\c_{r}}{\c_{r-1}}}\rp,
 \end{eqnarray}
with
\begin{eqnarray}\label{eq:fl5}
Z_{S,\infty} & \triangleq & e^{D_{0,S,\infty}} \nonumber \\
 D_{0,S,\infty} & \triangleq  & \max_{\x\in\cX,\|\x\|_2=r_x}  s  \max_{\x\in\cY,\|\y\|_2=\bar{r}_x}
 \lp f_{S}
+\sqrt{n}  r_y    \lp\sum_{k=2}^{r+1}c_k\h^{(k)}\rp^T\x
+ \sqrt{n} r_x \y^T\lp\sum_{k=2}^{r+1}b_k\u^{(2,k)}\rp \rp
\nonumber  \\
 b_k & \triangleq & b_k(\p)=\sqrt{\p_{k-1}-\p_k}
 \nonumber  \\
 c_k & \triangleq & c_k(\q)=\sqrt{\q_{k-1}-\q_k}.
 \end{eqnarray}
The following theorem is a fundamental sfl RDT result.
\begin{theorem} [\cite{Stojnicflrdt23}]
\label{thm:thmsflrdt1}  Assume linear/proportional large  $n$  regime where  $\alpha=\lim_{n\rightarrow\infty} \frac{m}{n}$, remains constant as  $n$ grows. Let the elements of  $G\in\mR^{m\times n}$ be independent standard normals  and assume the complete sfl RDT frame from \cite{Stojnicsflgscompyx23}.  For three given positive real scalars $r_x$, $\bar{r}_x$, and $r_y$
let $\cX(r_x,\bar{r}_x)\in\mR^n$ and $\cY\in\mR^m$ be two sets with norm of their elements being $r_x$ and $r_y$, respectively.  For a given function $f(\x):R^n\rightarrow R$ set
\begin{align}\label{eq:thmsflrdt2eq1}
   \psi_{rp}(r_x,\bar{r}_x) & \triangleq - \max_{\x\in\cX(r_x,\bar{r}_x)} s \max_{\y\in\cY} \lp f(\x)+\y^TG\x \rp
   \qquad  \mbox{(\bl{\textbf{random primal}})} \nonumber \\
   \psi_{rd}(\p,\q,\c,r_x,\bar{r}_x) & \triangleq    \frac{r_x^2r_y^2}{2}    \sum_{k=2}^{r+1}\Bigg(\Bigg.
   \p_{k-1}\q_{k-1}
   -\p_{k}\q_{k}
  \Bigg.\Bigg)
\c_k
\nonumber\\
&  \hspace{.15in}  - \psi_{S,\infty}(f(\x),\calX(r_x,\bar{r}_x),\calY,\p\q,\c,r_x,\bar{r}_x) \hspace{.24in} \mbox{(\bl{\textbf{fl random dual}})}.
 \end{align}
Let $\hat{\p_0}\rightarrow 1$,$\hat{\q_0}\rightarrow 1$,$\hat{\c_0}\rightarrow 1$, $\hat{\p}_{r+1}=\hat{\q}_{r+1}=\hat{\c}_{r+1}=0$, and let the non-fixed parts of $\hat{\p}\triangleq \hat{\p}(r_x,\bar{r}_x,r_y)$, $\hat{\q}\triangleq \hat{\q}(r_x,\bar{r}_x,r_y)$, and  $\hat{\c}\triangleq \hat{\c}(r_x,\bar{r}_x,r_y)$ be the solutions of the following system
\begin{eqnarray}\label{eq:thmsflrdt2eq2}
    \frac{d \psi_{rd}(\p,\q,\c,r_x,\bar{r}_x)}{d\p} =  0,\quad
    \frac{d \psi_{rd}(\p,\q,\c,r_x,\bar{r}_x)}{d\q} =  0,\quad
   \frac{d \psi_{rd}(\p,\q,\c,r_x,\bar{r}_x)}{d\c} =  0.
 \end{eqnarray}
 Then,
\begin{eqnarray}\label{eq:thmsflrdt2eq3}
    \lim_{n\rightarrow\infty} \frac{\mE_G  \psi_{rp}}{\sqrt{n}}
  & = &
 \lim_{n\rightarrow\infty} \psi_{rd}(\hat{\p}(r_x,\bar{r}_x,r_y),\hat{\q}(r_x,\bar{r}_x,r_y),\hat{\c}(r_x,\bar{r}_x,r_y),r_x) \qquad \mbox{(\bl{\textbf{strong sfl random duality}})},\nonumber \\
 \end{eqnarray}
where $\psi_{S,\infty}(\cdot)$ is as in (\ref{eq:fl4})-(\ref{eq:fl5}).
 \end{theorem}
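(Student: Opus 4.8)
The statement is an instance of the general \emph{stationarized fully lifted} random duality comparison principle of \cite{Stojnicsflgscompyx23,Stojnicnflgscompyx23}, so the plan is to recognize $\y^TG\x$ as a bilinearly indexed Gaussian process (blirp) over the index sets $\cX(r_x,\bar{r}_x)$ and $\cY$ and to run that machinery with the deterministic term $f(\x)$ carried along passively. First I would isolate the only structural fact used about the blirp: for two index pairs one has $\mE\lp\y^TG\x\rp\lp(\y')^TG\x'\rp=(\x^T\x')(\y^T\y')$, so its covariance factorizes into an $\x$-overlap piece and a $\y$-overlap piece. This factorization is exactly what the two monotone chains $\p$ and $\q$ of (\ref{eq:hmsfl2}) are built to encode, while the chain $\c$ supplies the weights of the nested (``lifted'') free-energy averages.

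The core step is a level-by-level Gaussian interpolation (a Slepian/Lindeberg-type argument carried out one lifting level at a time). One builds a one-parameter family of Gaussian processes interpolating between $\y^TG\x$ and the decoupled hierarchical process appearing inside $D_{0,S,\infty}$ in (\ref{eq:fl5}), namely $\sqrt{n}\,r_y\lp\sum_{k=2}^{r+1}c_k\h^{(k)}\rp^T\x+\sqrt{n}\,r_x\,\y^T\lp\sum_{k=2}^{r+1}b_k\u^{(2,k)}\rp$, and differentiates the correspondingly nested average
\[
\mE\,\frac{1}{n\c_r}\log\Bigl(\mE\bigl(\cdots\bigl(\mE\,(Z_{S,\infty})^{\c_2}\bigr)^{\c_3/\c_2}\cdots\bigr)^{\c_r/\c_{r-1}}\Bigr)
\]
along the path. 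Gaussian integration by parts turns this derivative into an expression governed by the difference of the two covariances; the monotonicity constraints on $(\p,\q,\c)$ pin down its sign, and integrating between the two endpoints yields, for \emph{every} admissible $(\p,\q,\c)$, a one-sided inequality between $\lim_{n\rightarrow\infty}\mE_G\psi_{rp}/\sqrt{n}$ and $\lim_{n\rightarrow\infty}\psi_{rd}(\p,\q,\c,r_x,\bar{r}_x)$. The purely deterministic term $\frac{r_x^2r_y^2}{2}\sum_{k=2}^{r+1}(\p_{k-1}\q_{k-1}-\p_k\q_k)\c_k$ in $\psi_{rd}$ from (\ref{eq:thmsflrdt2eq1}) is precisely the bookkeeping quantity that cancels the self-overlap contributions generated along the interpolation.

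To upgrade this bound to the equality (\ref{eq:thmsflrdt2eq3}) I would invoke the \emph{stationarization} part of the frame: since $(\p,\q,\c)$ are free, the sharp value is reached where the interpolation derivative vanishes identically, which is exactly a stationary point of $\psi_{rd}$, i.e.\ the solution $(\hat{\p},\hat{\q},\hat{\c})$ of the system (\ref{eq:thmsflrdt2eq2}); at such a point the level-by-level comparison holds with equality rather than inequality. Finally, Gaussian concentration of measure applied to $\x,\y\mapsto f(\x)+\y^TG\x$ shows that $\psi_{rp}/\sqrt{n}$ concentrates about its $G$-mean, so replacing $\psi_{rp}$ by $\mE_G\psi_{rp}$ costs nothing, and the $\beta\rightarrow\infty$ objects $Z_{S,\infty},D_{0,S,\infty},\psi_{S,\infty}$ of (\ref{eq:fl4})--(\ref{eq:fl5}) follow from the finite-$\beta$ identity by the interchange of the $\beta\rightarrow\infty$ and $n\rightarrow\infty$ limits admissible within the complete sfl RDT frame of \cite{Stojnicsflgscompyx23}. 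The hard part is this last upgrade from inequality to equality --- showing that at the stationary point the hierarchical decoupled process reproduces the free energy of the original blirp \emph{exactly}, and that the (generically saddle-type, not extremal) stationarity conditions may be commuted with the thermodynamic limit; this is the delicate content that the apparatus of \cite{Stojnicflrdt23,Stojnicsflgscompyx23} is built to deliver.
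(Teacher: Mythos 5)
Your proposal is correct and follows essentially the same route as the paper: the paper's own proof is a one-line deferral to the sfl RDT machinery of \cite{Stojnicflrdt23,Stojnicsflgscompyx23} (noting only that the constraint $\frac{1}{n}\sum_i\log(1-n\x_i^2)=\bar{r}_x$ added to $\cX$ is a cosmetic change), and your sketch is an accurate unpacking of exactly that machinery --- the bilinear covariance factorization, the level-by-level interpolation with the $\p,\q,\c$ chains, and the stationarization step. The only caveat worth recording is that the final upgrade from inequality to equality is, in this framework, largely carried by the hypothesis ``assume the complete sfl RDT frame,'' so it enters as an assumption of the theorem rather than something the proof must independently establish.
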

\begin{proof}
Follows immediately from \cite{Stojnicflrdt23,Stojnicsflgscompyx23} after trivial cosmetic changes in the definition of set $\calX$.
 \end{proof}

\subsection{Practical relevance}
\label{sec:prac}

To make use of the above theorem, we take $f(\x)=0$, $\calY=\mS^m$ (with $\mS^m$ being the unit sphere in $\mR^m$), and assume that $\calX(r_x,\bar{r}_x)$ is as given in (\ref{eq:clupskeq1a0}). Noting that $r_y=1$ we then recognize that  the key object of practical interest is the following so-called \emph{random dual}
\begin{align}\label{eq:prac1}
    \psi_{rd}(\p,\q,\c,r_x,\bar{r}_x) & \triangleq    \frac{r_x^2}{2}    \sum_{k=2}^{r+1}\Bigg(\Bigg.
   \p_{k-1}\q_{k-1}
   -\p_{k}\q_{k}
  \Bigg.\Bigg)
\c_k
  - \psi_{S,\infty}(0,\calX(r_x,\bar{r}_x),\calY,\p,\q,\c,r_x). \nonumber \\
  & =   \frac{1}{2}    \sum_{k=2}^{r+1}\Bigg(\Bigg.
   \p_{k-1} \q_{k-1}
   -\p_{k}\q_{k}
  \Bigg.\Bigg)
\c_k
  - \frac{1}{n}\varphi(D^{(bin)}(r_x,\bar{r}_x))- \frac{1}{n}\varphi(D^{(sph)}(r_x,\bar{r}_x)),
  \end{align}
where analogously to (\ref{eq:fl4})-(\ref{eq:fl5})
  \begin{eqnarray}\label{eq:prac2}
\varphi(D,\c) & = &
 \mE_{G,{\mathcal U}_{r+1}} \frac{1}{\c_r} \log
\lp \mE_{{\mathcal U}_{r}} \lp \dots \lp \mE_{{\mathcal U}_3}\lp\lp\mE_{{\mathcal U}_2} \lp
\lp    e^{D}   \rp^{\c_2}\rp\rp^{\frac{\c_3}{\c_2}}\rp\rp^{\frac{\c_4}{\c_3}} \dots \rp^{\frac{\c_{r}}{\c_{r-1}}}\rp,
  \end{eqnarray}
and
\begin{eqnarray}\label{eq:prac3}
D^{(bin)}(r_x,\bar{r}_x) & = & \max_{\x\in\cX(r_x,\bar{r}_x)} \lp   s\sqrt{n} r_y      \lp\sum_{k=2}^{r+1}c_k\h^{(k)}\rp^T\x  \rp
=
\max_{\x\in\cX (r_x,\bar{r}_x) } \lp   s\sqrt{n}      \lp\sum_{k=2}^{r+1}c_k\h^{(k)}\rp^T\x  \rp
\nonumber \\
  D^{(sph)}(r_x,\bar{r}_x) & =  &   s \max_{\y\in\cY}
\lp \sqrt{n} r_x \y^T\lp\sum_{k=2}^{r+1}b_k\u^{(2,k)}\rp \rp
=
s r_x  \max_{\y\in\mS^m}
\lp \sqrt{n} \y^T\lp\sum_{k=2}^{r+1}b_k\u^{(2,k)}\rp \rp.
 \end{eqnarray}
From \cite{Stojnicclupsk25}'s (75) and (76) one finds
\begin{eqnarray}\label{eq:excprac4}
D^{(bin)}(r_x,\bar{r}_x)  =  \max_{\x\in\bar{\cX}(r_x,\bar{r}_x)} \lp s \sqrt{n}      \lp\sum_{k=2}^{r+1}c_k\h^{(k)}\rp^T\x  \rp
 = \min_{\gamma,\nu} \lp -  \sum_{i=1}^n D^{(bin)}_i(c_k) +\gamma r_x^2 n +\nu \bar{r}_x n \rp,
 \end{eqnarray}
where scaling $\gamma\sim \gamma\sqrt{n}$ and $\nu\sim \nu\sqrt{n}$ is adopted and
\begin{eqnarray}\label{eq:excprac5}
D^{(bin)}_i(c_k)=    - s\lp\sum_{k=2}^{r+1}c_k\h_i^{(k)}\rp \x_i +\gamma\x_i^2 + \nu\log(1-n\x_i^2).
\end{eqnarray}
On the other hand, from \cite{Stojnichopflrdt23}'s (29)-(30), we have
\begin{eqnarray}\label{eq:prac9}
   D^{(sph)}(r_x,\bar{r}_x)
  & =  &   s r_x \min_{\gamma_{sq}} \lp \sum_{i=1}^{m} D_i^{(sph)}(b_k)+\gamma_{sq}n \rp,
 \end{eqnarray}
with
\begin{eqnarray}\label{eq:prac10}
   D_i^{(sph)}(b_k)= \frac{\lp \sum_{k=2}^{r+1}b_k\u_i^{(2,k)}  \rp^2}{4\gamma_{sq}}.
 \end{eqnarray}
After connecting $\xi(r_x,\bar{r}_x)$ and $f_{chop}(\infty)$ from (\ref{eq:cska0limlogpartfunsqrt}) and random primal, $\psi_{rp}(r_x,\bar{r}_x)$, from Theorem \ref{thm:thmsflrdt1}, we write
 \begin{eqnarray}
\xi(r_x,\bar{r}_x) = f_{chop}(\infty)
& = &  \lim_{n\rightarrow\infty}\frac{\mE_G \max_{\x\in\cX(r_x,\bar{r}_x)} s\max_{\y\in\mS^m} \y^TG\x  }{\sqrt{n}}
\nonumber \\
 & = &
      -\lim_{n\rightarrow\infty} \frac{\mE_G  \psi_{rp}}{\sqrt{n}}
   =
 -  \lim_{n\rightarrow\infty} \psi_{rd}(\hat{\p},\hat{\q},\hat{\c},r_x,\bar{r}_x),
  \label{eq:prac11}
\end{eqnarray}
where the non-fixed parts of $\hat{\p}$, $\hat{\q}$, and  $\hat{\c}$ satisfy
\begin{eqnarray}\label{eq:prac12}
    \frac{d \psi_{rd}(\p,\q,\c,r_x,\bar{r}_x)}{d\p} =  0,\quad
    \frac{d \psi_{rd}(\p,q,\c,r_x,\bar{r}_x)}{d\q} =  0,\quad
   \frac{d \psi_{rd}(\p,\q,\c,r_x,\bar{r}_x)}{d\c} =  0.
 \end{eqnarray}
 Relying on (\ref{eq:prac1})-(\ref{eq:prac12}), we further have
 \begin{eqnarray}
 \lim_{n\rightarrow\infty} \psi_{rd}(\hat{\p},\hat{\q},\hat{\c},r_x,\bar{r}_x) =  \bar{\psi}_{rd}(\hat{\p},\hat{\q},\hat{\c},\hat{\gamma},\hat{\nu},\hat{\gamma}_{sq},r_x,\bar{r}_x),
  \label{eq:prac12a}
\end{eqnarray}
where
\begin{eqnarray}\label{eq:prac13}
    \bar{\psi}_{rd}(\p,\q,\c,\gamma, \nu,\gamma_{sq},r_x,\bar{r}_x)   & = &  \frac{1}{2}    \sum_{k=2}^{r+1}\Bigg(\Bigg.
   \p_{k-1}\q_{k-1}
   - \p_{k}\q_{k}
  \Bigg.\Bigg)
\c_k
-\gamma r_x^2-\nu\bar{r}_x
- \varphi(D_1^{(bin)}(c_k(\q)),\c)
\nonumber \\
& &
-s\gamma_{sq} r_x - \alpha \varphi(D_1^{(sph)}(b_k(\p)),\c).
  \end{eqnarray}
Based on (\ref{eq:prac2}), (\ref{eq:excprac5}), and  (\ref{eq:prac10}), $\varphi(D_1^{(bin)}(c_k(\q)),\c)$  and $\varphi(D_1^{(sph)}(b_k(\p)),\c)$  in (\ref{eq:prac13})  are  given by
\begin{align}\label{eq:prac14}
\varphi(D_1^{(bin)}(c_k(\q)),\c) & =
 \mE_{{\mathcal U}_{r+1}} \frac{1}{\c_r} \log
\lp \mE_{{\mathcal U}_{r}} \lp \dots \lp \mE_{{\mathcal U}_3}\lp\lp\mE_{{\mathcal U}_2} \lp
    e^{  -\c_2 D_1^{(bin)}(c_k(\q))  }  \rp\rp^{\frac{\c_3}{\c_2}}\rp\rp^{\frac{\c_4}{\c_3}} \dots \rp^{\frac{\c_{r}}{\c_{r-1}}}\rp,
    \nonumber \\
\varphi(D_1^{(sph)}(b_k(\p)),\c) & =
 \mE_{{\mathcal U}_{r+1}} \frac{1}{\c_r} \log
\lp \mE_{{\mathcal U}_{r}} \lp \dots \lp \mE_{{\mathcal U}_3}\lp\lp\mE_{{\mathcal U}_2} \lp
    e^{  s r_x \c_2 D_1^{(sph}(b_k(\p))  }  \rp\rp^{\frac{\c_3}{\c_2}}\rp\rp^{\frac{\c_4}{\c_3}} \dots \rp^{\frac{\c_{r}}{\c_{r-1}}}\rp,
  \end{align}
whereas $\hat{\gamma}$, $\hat{\nu}$, $\hat{\gamma}_{sq}$, and  the non-fixed parts of $\hat{\p}$, $\hat{\q}$, and  $\hat{\c}$ are the solutions of
\begin{eqnarray}\label{eq:prac16}
    \frac{d \bar{\psi}_{rd}(\p,\q,\c,\gamma,\nu,\gamma_{sq},r_x,\bar{r}_x)}{d\p}  =
    \frac{d \bar{\psi}_{rd}(\p,\q,\c,\gamma,\nu,\gamma_{sq},r_x,\bar{r}_x)}{d\q}  =
   \frac{d \bar{\psi}_{rd}(\p,\q,\c,\gamma,\nu,\gamma_{sq},r_x,\bar{r}_x)}{d\c} & = & 0 \nonumber \\
      \frac{d \bar{\psi}_{rd}(\p,\q,\c,\gamma,\nu,\gamma_{sq},r_x,\bar{r}_x)}{d\gamma}  =
            \frac{d \bar{\psi}_{rd}(\p,\q,\c,\gamma,\nu,\gamma_{sq},r_x,\bar{r}_x)}{d\nu}  =
   \frac{d \bar{\psi}_{rd}(\p,\q,\c,\gamma,\nu,\gamma_{sq},r_x,\bar{r}_x)}{d\gamma_{sq}} & = & 0.\nonumber \\
 \end{eqnarray}
After  observing
\begin{eqnarray}\label{eq:prac17}
b_k(\hat{\p})  & = & \sqrt{\hat{\p}_{k-1}-\hat{\p}_k} \nonumber \\
c_k(\hat{\q})  & = & \sqrt{\hat{\q}_{k-1}-\hat{\q}_k},
 \end{eqnarray}
and connecting  (\ref{eq:prac11}) to (\ref{eq:prac12a}) one further finds
 \begin{eqnarray}
f_{chop}(\infty)
& = &  \lim_{n\rightarrow\infty}\frac{\mE_G \max_{\x\in\cX(r_x,\bar{r}_x)} s \max_{\t\in\mS^m}   \y^TG\x}{\sqrt{n}}
\nonumber \\
& = &
 - \lim_{n\rightarrow\infty} \psi_{rd}(\hat{\p},\hat{\q},\hat{\c},r_x,\bar{r}_x)
 = -  \bar{\psi}_{rd}(\hat{\p},\hat{\q},\hat{\c},\hat{\gamma},\hat{\nu},\hat{\gamma}_{sq},r_x,\bar{r}_x) \nonumber \\
 & = &     -\frac{1}{2}    \sum_{k=2}^{r+1}\Bigg(\Bigg.
    \hat{\p}_{k-1}\hat{\q}_{k-1}
   - \hat{\p}_{k}\hat{\q}_{k}
  \Bigg.\Bigg)
\hat{\c}_k
 + \hat{\gamma} r_x^2 + \hat{\nu}\bar{r}_x
  + \varphi(D_1^{(bin)}(c_k(\hat{\q})),\c)
\nonumber \\
& &
   + s\hat{\gamma}_{sq}  r_x
  + \alpha \varphi(D_1^{(sph)}(b_k(\hat{\p})),\c).
  \label{eq:prac18}
\end{eqnarray}
We summarize the above results in the following theorem.

\begin{theorem}
  \label{thme:thmprac1}
Consider linear/proportional  large $n$  regime with $\alpha=\lim_{n\rightarrow\infty} \frac{m}{n}$ and  assume the complete sfl RDT setup of \cite{Stojnicsflgscompyx23}. Let the ``fixed'' parts of $\hat{\p}$, $\hat{\q}$, and $\hat{\c}$ be $\hat{\p}_1\rightarrow 1$, $\hat{\q}_1\rightarrow 1$, $\hat{\c}_1\rightarrow 1$, $\hat{\p}_{r+1}=\hat{\q}_{r+1}=\hat{\c}_{r+1}=0$. For $\varphi(\cdot)$ and $\bar{\psi}_{rd}(\cdot)$ from (\ref{eq:prac2}) and (\ref{eq:prac13}) let $\hat{\gamma}$, $\hat{\nu}$, $\hat{\gamma}_{sq}$, and the ``non-fixed'' parts of $\hat{\p}_k$, $\hat{\q}_k$, and $\hat{\c}_k$ ($k\in\{2,3,\dots,r\}$) be the solutions of (\ref{eq:prac16}). Taking $b_k(\hat{\p})$ and $c_k(\hat{\q})$  as in (\ref{eq:prac17}) gives
 \begin{eqnarray}
\xi(r_x,\bar{r}_x) \triangleq f_{chop}(\infty)
& = &   -  \bar{\psi}_{rd}(\hat{\p},\hat{\q},\hat{\c},\hat{\gamma},\hat{\nu},\hat{\gamma}_{sq},r_x,\bar{r}_x) \nonumber \\
 & = &     -\frac{1}{2}    \sum_{k=2}^{r+1}\Bigg(\Bigg.
    \hat{\p}_{k-1}\hat{\q}_{k-1}
   - \hat{\p}_{k}\hat{\q}_{k}
  \Bigg.\Bigg)
\hat{\c}_k
 + \hat{\gamma} r_x^2 + \hat{\nu}\bar{r}_x
  + \varphi(D_1^{(bin)}(c_k(\hat{\q})),\c)
\nonumber \\
& &
   + s\hat{\gamma}_{sq}  r_x
  + \alpha \varphi(D_1^{(sph)}(b_k(\hat{\p})),\c).
  \label{eq:thmprac1eq1}
\end{eqnarray}
\end{theorem}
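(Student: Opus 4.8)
The plan is to obtain Theorem~\ref{thme:thmprac1} as a direct specialization of the general sfl RDT result of Theorem~\ref{thm:thmsflrdt1}, followed by the explicit evaluation of the random dual $\psi_{rd}$ that is already carried out in the discussion leading to (\ref{eq:prac1})--(\ref{eq:prac18}). First I would set $f(\x)=0$, $\calY=\mS^m$ (so that $r_y=1$), and $\calX(r_x,\bar r_x)$ as in (\ref{eq:clupskeq1a0}). With these choices the random primal $\psi_{rp}$ of Theorem~\ref{thm:thmsflrdt1} becomes the negative of precisely the CLuP$\pm$Hop model objective $\max_{\x\in\cX(r_x,\bar r_x)}s\max_{\y\in\mS^m}\y^TG\x$, so that (\ref{eq:cska0limlogpartfunsqrt}) together with strong sfl random duality (\ref{eq:thmsflrdt2eq3}) gives $\xi(r_x,\bar r_x)=f_{chop}(\infty)=-\lim_{n\to\infty}\mE_G\psi_{rp}/\sqrt n=-\lim_{n\to\infty}\psi_{rd}(\hat\p,\hat\q,\hat\c,r_x,\bar r_x)$, with $(\hat\p,\hat\q,\hat\c)$ solving (\ref{eq:thmsflrdt2eq2}). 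It then remains to evaluate $\psi_{S,\infty}$ from (\ref{eq:fl4})--(\ref{eq:fl5}) at these particular sets.

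The key structural step is the decoupling of the binary and spherical contributions. With $f_S=0$, the exponent in $D_{0,S,\infty}$ in (\ref{eq:fl5}) is the sum of an $\x$-only term $s\sqrt n r_y\lp\sum_k c_k\h^{(k)}\rp^T\x$ and a $\y$-only term $s\sqrt n r_x\y^T\lp\sum_k b_k\u^{(2,k)}\rp$, with no cross term; hence the joint maximization separates and $Z_{S,\infty}=e^{D^{(bin)}(r_x,\bar r_x)}e^{D^{(sph)}(r_x,\bar r_x)}$, where $D^{(bin)}$ depends only on the Gaussians $\h^{(k)}$ and $D^{(sph)}$ only on the independent Gaussians $\u^{(2,k)}$ (the sphere maximization turning the linear functional into its norm, and carrying the sign $s$ as in (\ref{eq:prac9})). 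Because the two Gaussian families are independent, the logarithm in the nested expectation defining $\psi_{S,\infty}$ splits additively; the binary part is an i.i.d.\ product over the $n$ coordinates of $\x$ and the spherical part an i.i.d.\ product over the $m$ coordinates of $\y$, so each piece collapses to a single-letter quantity and one obtains $\psi_{S,\infty}=\tfrac1n\varphi(D^{(bin)})+\tfrac1n\varphi(D^{(sph)})$ as in (\ref{eq:prac1}), with the spherical term acquiring the factor $\alpha$ in the limit since $m/n\to\alpha$.

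Next I would evaluate the two constrained inner maximizations. For $D^{(bin)}$ I would dualize the two constraints defining $\cX(r_x,\bar r_x)$ — the norm constraint $\|\x\|_2=r_x$ and the equality $\tfrac1n\sum_i\log(1-n\x_i^2)=\bar r_x$ — with multipliers $\gamma$ and $\nu$ (rescaled as $\gamma\sim\gamma\sqrt n$, $\nu\sim\nu\sqrt n$ so as to obtain a nondegenerate limit), which decouples over coordinates and yields (\ref{eq:excprac4})--(\ref{eq:excprac5}), exactly as in \cite{Stojnicclupsk25}. For $D^{(sph)}$ the Lagrangian treatment of \cite{Stojnichopflrdt23} (multiplier $\gamma_{sq}$ for the spherical constraint) gives (\ref{eq:prac9})--(\ref{eq:prac10}). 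Substituting both into $\psi_{rd}$ produces $\bar\psi_{rd}$ of (\ref{eq:prac13}); the stationarity system is then enlarged from (\ref{eq:thmsflrdt2eq2}) to (\ref{eq:prac16}) by adjoining the stationarity conditions in $\gamma,\nu,\gamma_{sq}$ (which by the envelope theorem just re-express the defining constraints), and assembling everything with $b_k(\hat\p)$, $c_k(\hat\q)$ as in (\ref{eq:prac17}) yields (\ref{eq:thmprac1eq1}).

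I expect the main obstacle to be the rigorous justification of the Lagrangian/Legendre step for the binary part: the set $\cX(r_x,\bar r_x)$ is non-convex (the box constraints $\x_i^2\le 1/n$ together with the sphere and the log-sum equality), so tightness of the dual relaxation of $D^{(bin)}$ is not automatic and must be argued through the large-$n$ concentration and the separable per-coordinate structure, as in \cite{Stojnicclupsk25,Stojnichopflrdt23}. One must also ensure that $\lim_{n\to\infty}$ commutes with the inner maximizations and with passing to the stationary point, and that the chosen rescalings of $\gamma,\nu$ are the correct ones. Since all of this machinery is inherited from \cite{Stojnicflrdt23,Stojnicsflgscompyx23,Stojnicclupsk25,Stojnichopflrdt23}, the remaining content is essentially bookkeeping — keeping track of the factor $\alpha$, the sign $s$, and the correct single-coordinate objects $D_1^{(bin)}$ and $D_1^{(sph)}$.
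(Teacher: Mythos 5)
Your proposal is correct and follows essentially the same route as the paper: the paper's own proof is a one-line appeal to the discussion in (\ref{eq:prac1})--(\ref{eq:prac18}), which is precisely the specialization of Theorem \ref{thm:thmsflrdt1} with $f(\x)=0$, $\calY=\mS^m$, $r_y=1$, the binary/spherical decoupling of $\psi_{S,\infty}$, and the Lagrangian single-letter forms of $D^{(bin)}$ and $D^{(sph)}$ imported from \cite{Stojnicclupsk25} and \cite{Stojnichopflrdt23} that you reconstruct. The rigor concerns you raise about the non-convex constraint set are likewise deferred by the paper to the cited sfl RDT machinery rather than argued afresh.
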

\begin{proof}
Follows directly from (\ref{eq:prac18}) as an immediate consequence of the above discussion and Theorem \ref{thm:thmsflrdt1}.
\end{proof}

\subsection{Numerical evaluations}
\label{sec:nuemrical}

Theorem \ref{thme:thmprac1} is practically relevant as long as  all the underlying numerical evaluations cam be conducted. We show next how this is done. It turns out as convenient to work with the third partial level of lifting (we denote it as 3-spl RDT; in general, for the $r$-th level we follow the convention of \cite{Stojnichopflrdt23} and denote partial and full lifting by $r$-spl and $r$-sfl, respectively). Such a choice makes a nice balance between accuracy and numerical burden (for example, recalling on \cite{Stojnichopflrdt23}, one has that on the third partial level the limiting $\pm$Hop ground state energies are approached with accuracy $\sim 0.001$).

To put everything on concrete terms, we first note that on the third partial level   $r=3$ and $\hat{\p}_1\rightarrow 1$, $\hat{\q}_1\rightarrow 1$, and  $\hat{\p}_{r+1}=\hat{\p}_{4}=\hat{\q}_{r+1}=\hat{\q}_{4}=0$. In addition to having  $\hat{\c}_{2}\neq 0$, $\hat{\c}_{3}\neq 0$, $\p_2\neq0$, and $\q_2\neq0$ one also has $\p_3=\q_3=0$. Keeping all of this in mind we can then write
\begin{align}\label{eq:prac26}
   - \bar{\psi}_{rd}^{(3,p)} (\p,\q,\c,\gamma,\nu,\gamma_{sq},r_x,\bar{r}_x)   & =  - \frac{1}{2}
(1-\p_2\q_2)\c_2r_x^2 - \frac{1}{2}
\p_2\q_2\c_3r_x^2
\nonumber \\
&
   + \gamma r_x^2 +\nu\bar{r}_x
  + \frac{1}{\c_3} \log\lp   \mE_{{\mathcal U}_3}   \lp   \mE_{{\mathcal U}_2} e^{   -\c_2 D_1^{(bin)}(c_k(\q))    }  \rp^{\frac{\c_3}{\c_2}}   \rp \nonumber \\
&
   + s \gamma r_x
     + \frac{\alpha}{\c_3} \log\lp   \mE_{{\mathcal U}_3}   \lp   \mE_{{\mathcal U}_2} e^{   s r_x \c_2 D_1^{(sph)}(b_k(\p))    }  \rp^{\frac{\c_3}{\c_2}}   \rp   \nonumber \\
& =  - \frac{1}{2}
(1-\p_2\q_2)\c_2r_x^2
 - \frac{1}{2}
\p_2\q_2\c_3r_x^2
   + \gamma r_x^2 + \nu\bar{r}_x
  + f_q^{(3,p)} + s \gamma_{sq} r_x + \alpha f_{q,s}^{(3,p)},
   \end{align}
where
\begin{eqnarray} \label{eq:prac26a0}
f_q^{(3,p)} = \frac{1}{\c_3} \log\lp   \mE_{{\mathcal U}_3}   \lp   \mE_{{\mathcal U}_2} e^{   - \c_2 D_1^{(bin)}(c_k(\q))    }  \rp^{\frac{\c_3}{\c_2}}   \rp,
 \end{eqnarray}
 and based on \cite{Stojnichopflrdt23}'s (48) and (90)
\begin{eqnarray} \label{eq:prac26a0a0}
f_{q,s}^{(3,p)}& = & \frac{\alpha}{\c_3} \log\lp   \mE_{{\mathcal U}_3}   \lp   \mE_{{\mathcal U}_2} e^{   s r_x \c_2 D_1^{(sph)}(b_k(\p))    }  \rp^{\frac{\c_3}{\c_2}}   \rp
\nonumber \\
& = &
  \alpha
\Bigg(\Bigg. -\frac{1}{2\c_2} \log \lp \frac{2s\gamma_{sq}-\c_2r_x(1-\p_2)}{2s\gamma_{sq}} \rp   -\frac{1}{2\c_3} \log \lp \frac{2s\gamma_{sq}-\c_2 r_x (1-\p_2)-\c_3r_x\p_2 }  {2s\gamma_{sq}-\c_2r_x(1-\p_2)} \rp     \Bigg.\Bigg).
 \end{eqnarray}
The above provides all necessary ingredients to conduct the concrete numerical evaluations. We show next what results such evaluations produce.

\subsubsection{Concrete results}
\label{sec:concrete}

Recalling on (\ref{eq:excalgimpeq8}) and  (\ref{eq:excalgimpeq8a0}) we first write
\begin{eqnarray}\label{eq:concexcalgimpeq8}
 \bar{f}_{b}^+ \lp r_x,\bar{r}_x \rp
& = & \min_{\x\in,\bar{\cX}(r_x,\bar{r}_x )} - t_{0x}r_x - \log \lp  -4r_x^2 + \lp \xi^+(r_x,\bar{r}_x )\rp^2 + \kappa \rp -\bar{r}_x
\nonumber \\
& = &
 - t_{0x}r_x - \log \lp  -4r_x^2 + \lp f_{chop}^+(\infty) \rp^2 + \kappa \rp - \bar{r}_x,
\end{eqnarray}
and
\begin{eqnarray}\label{eq:concexcalgimpeq8a0}
 \bar{f}_{b}^- \lp r_x,\bar{r}_x \rp
& = & \min_{\x\in,\bar{\cX}(r_x,\bar{r}_x )} - t_{0x}r_x - \log \lp  -0r_x^2 + \lp \xi^-(r_x,\bar{r}_x )\rp^2  + \kappa \rp -\bar{r}_x
\nonumber \\
& = &
 - t_{0x}r_x - \log \lp  -0r_x^2 + \lp f_{chop}^-(\infty) \rp^2 + \kappa \rp - \bar{r}_x .
\end{eqnarray}
 After setting
\begin{eqnarray}\label{eq:concexcalgimpeq9a0a0}
\bar{r}_x^{(opt,\pm)}  \triangleq  \mbox{argmin}_{\bar{r}_x <0} \bar{f}_{b}^{\pm} \lp r_x,\bar{r}_x \rp,
\end{eqnarray}
and
\begin{eqnarray}\label{eq:excalgimpeq9a0}
\bar{f}_{b}^{\pm} \lp r_x \rp
 &  \triangleq &  \min_{\bar{r}_x <0} \bar{f}_{b}^{\pm} \lp r_x,\bar{r}_x \rp  =  \bar{f}_{b}^{\pm} \lp r_x,\bar{r}_x^{(opt,\pm)} \rp \nonumber \\
\xi^{\pm} \lp r_x \rp
 &  \triangleq &    \xi^{\pm} \lp r_x,\bar{r}_x^{(opt,\pm)} \rp,
\end{eqnarray}
we have for the optimal $r_x$
\begin{eqnarray}\label{eq:excalgimpeq9}
\hat{r}_x^{\pm} =  \mbox{argmin}_{r_x\in(0,1]} \bar{f}_{b}^{\pm} \lp r_x \rp.
\end{eqnarray}

After conducting all numerical evaluations  we obtained results presented in Figures \ref{fig:fig5a0}-\ref{fig:fig7a0}.  Theoretical predictions for the dynamics of all three critical quantities, $\frac{\bar{f}_b(\hat{r}_x)}{t_{0x}}$, $\xi(\hat{r}_x)$, and $\hat{r}_x$  are shown ($\pm$ signs are skipped in all figures to lighten notation; clearly for $+$Hop model, all quantities are with $+$ sign and for $-$Hop with $-$ sign). Also, for the very same quantities, we in parallel show the simulated results obtained by running CLuP$\pm$Hop algorithm for $n=200$, $n=1000$, and $n=4000$. Despite the simulated dimensions being rather small (compared to $n\rightarrow \infty$), an excellent agreement between theoretical and simulated results is observed.

We should also add another interesting point. Namely, all of the above numerical evaluations were repeated while relying on modulo-$\m$ sfl results from \cite{Stojnicsflgscompyx23,Stojnicnflgscompyx23,Stojnicflrdt23} with no difference in the obtained results. This indicates the  \emph{minimization} type of $\c$ \emph{stationarity} and is in a remarkable agreement with the same type of discovery from \cite{Stojnichopflrdt23,Stojnicbinperflrdt23,Stojnicnegsphflrdt23,Stojnicabple25,Stojnicclupsk25}.

\begin{figure}[h]
\centering
\centerline{\includegraphics[width=1.00\linewidth]{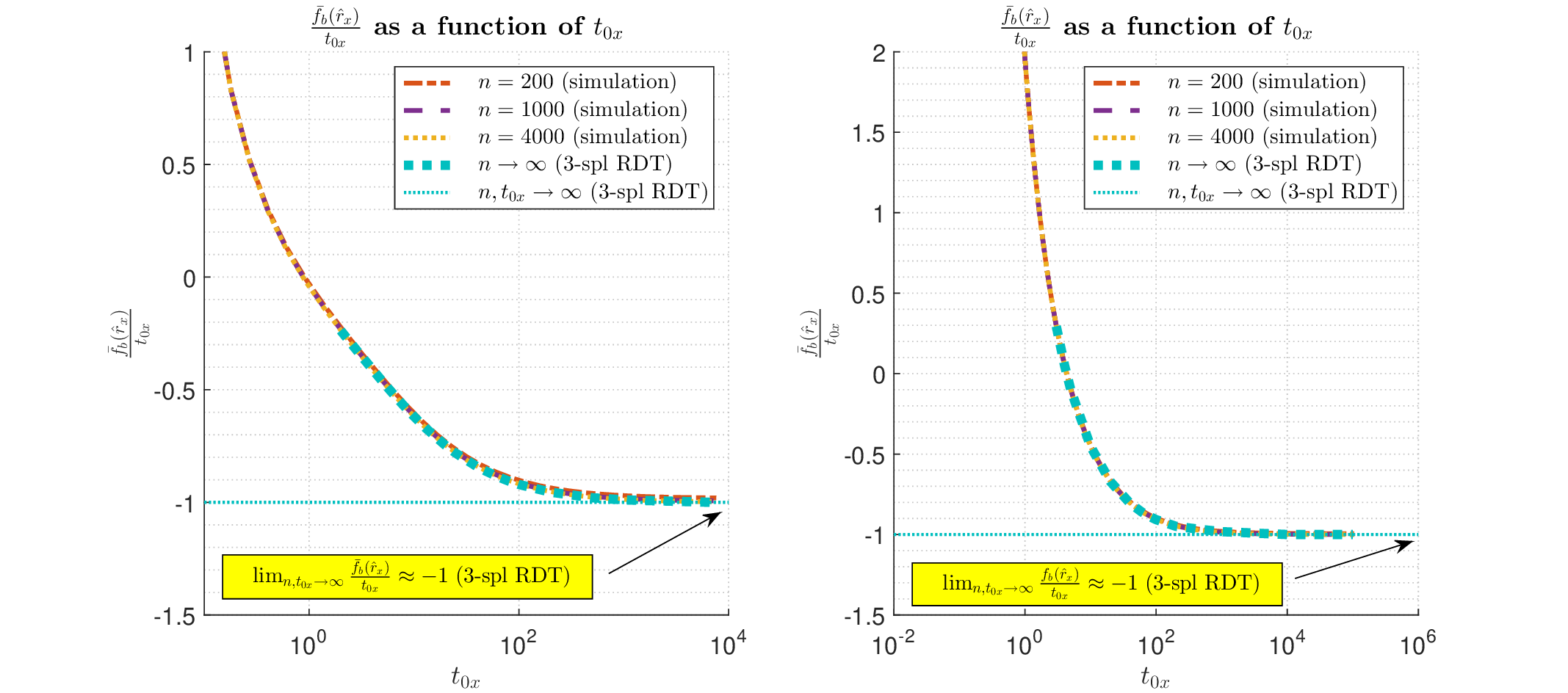}}
\caption{$\frac{\bar{f}_b(\hat{r}_x)}{t_{0x}}$ as a function of $t_{0x}$;  $+$Hop (left) and $-$Hop (right) }
\label{fig:fig5a0}
\end{figure}
\begin{figure}[h]
\centering
\centerline{\includegraphics[width=1.00\linewidth]{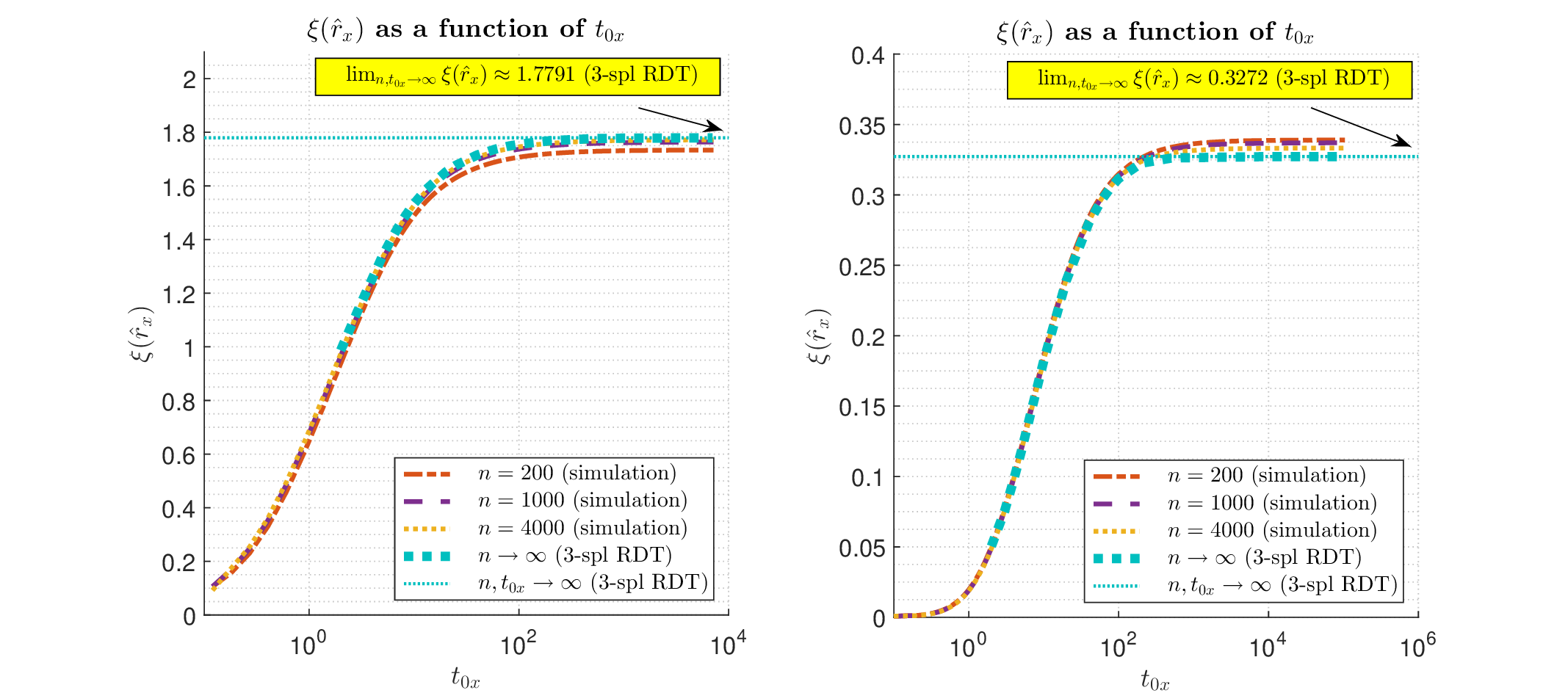}}
\caption{$\bar{\xi}(\hat{r}_x)$ as a function of $t_{0x}$;  $+$Hop (left) and $-$Hop (right) }
\label{fig:fig6a0}
\end{figure}
\begin{figure}[h]
\centering
\centerline{\includegraphics[width=1.00\linewidth]{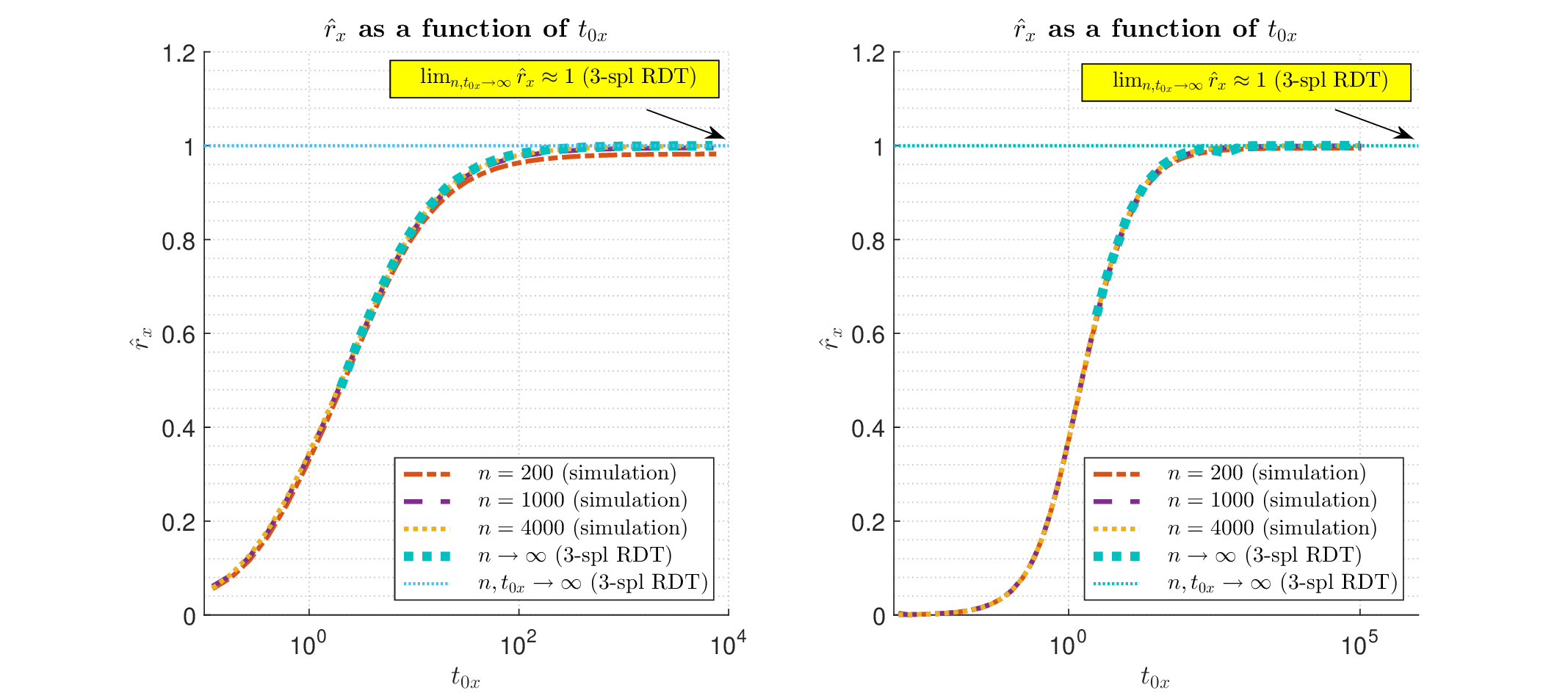}}
\caption{$\hat{r}_x$ as a function of $t_{0x}$;  $+$Hop (left) and $-$Hop (right)}
\label{fig:fig7a0}
\end{figure}

\subsubsection{Properties}
\label{sec:properties}

The conducted experiments allowed also to look at several interesting properties in more detail.

\noindent \underline{\textbf{\emph{Convergence with respect to $n$}}}

The theoretical results are obtained in the thermodynamic limit with $n\rightarrow\infty$. Since only finite $n$'s can be simulated, it is interesting to see how quickly simulated results converge as $n$ increases. We focus on key quantity of interest, $\xi(\hat{r}_x)$, and show in Figure \ref{fig:figconv1} how it changes as a function of $n$. Starting with $n$ as small as a few tens and then increasing it to reach  a few thousands  we observe the pace at which the simulated CLuP$\pm$Hop dynamics approach theoretical predictions. For $+$Hop the most rapid portion of the convergence process happens for $n\leq 1000$ and then it visibly slows down. On the other hand, for $-$Hop after starting slowly for $n\sim 100$ and the convergence speeds up for $n\sim 1000$ before naturally  slowing down as $n$ gets to the range of several thousands. Such $-$Hop behavior is somewhat interesting and it should be noted that it is in an excellent agreement with observations made in \cite{StojnicHopBnds10}.

In Figure  \ref{fig:figconv2} we show how  $\lim_{t_{0x}\rightarrow\infty}\xi(\hat{r}_x)$  changes with $n$ ($\lim_{t_{0x}\rightarrow\infty}\xi(\hat{r}_x)$ is the limiting value of  $\xi(\hat{r}_x)$ obtained at the end of the algorithm's running).  $\lim_{t_{0x}\rightarrow\infty}\hat{\xi}(\hat{r}_x)$ -- the effective objective value of (\ref{eq:inteq1}) produced by the algorithm and an emulation of $\lim_{t_{0x}\rightarrow\infty}\xi(\hat{r}_x)$ obtained after $\pm \frac{1}{\sqrt{n}}$ rounding of the algorithm's output $\hat{\x}$ -- is shown in parallel as well. Clearly,  for $n\rightarrow\infty$ the two quantities $\lim_{t_{0x}\rightarrow\infty}\xi(\hat{r}_x)$ and $\lim_{t_{0x}\rightarrow\infty}\hat{\xi}(\hat{r}_x)$ are indistinguishable. However, for finite $n$ they do differ and one expects the difference to fade away as $n$ increases. As  Figure  \ref{fig:figconv2} shows, this indeed happens. Moreover, it happens already for $n$ as small sa a few thousands.

\begin{figure}[h]
\centering
\centerline{\includegraphics[width=1.00\linewidth]{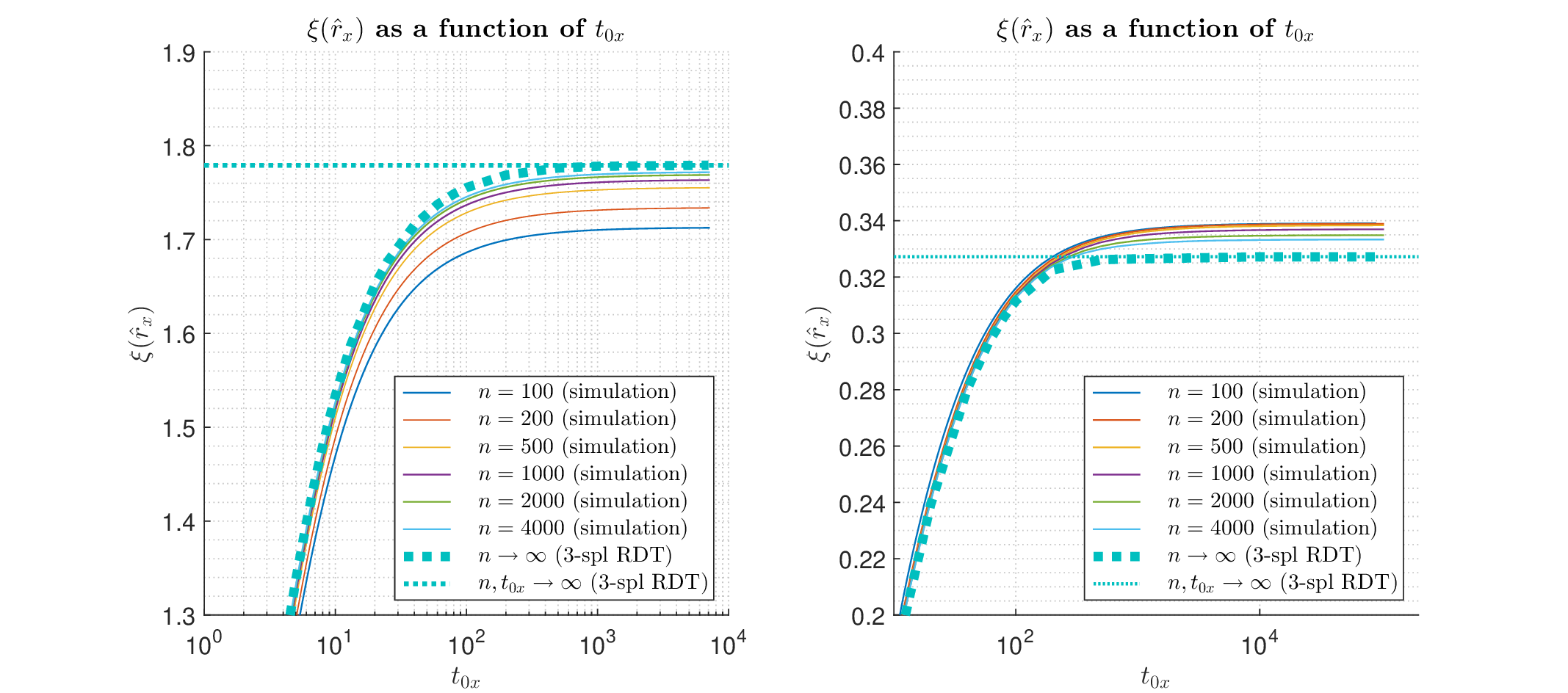}}
\caption{Convergence of $\bar{\xi}(\hat{r}_x)$ as $n$ grows;  $+$Hop (left) and $-$Hop (right) }
\label{fig:figconv1}
\end{figure}
\begin{figure}[h]
\centering
\centerline{\includegraphics[width=1.00\linewidth]{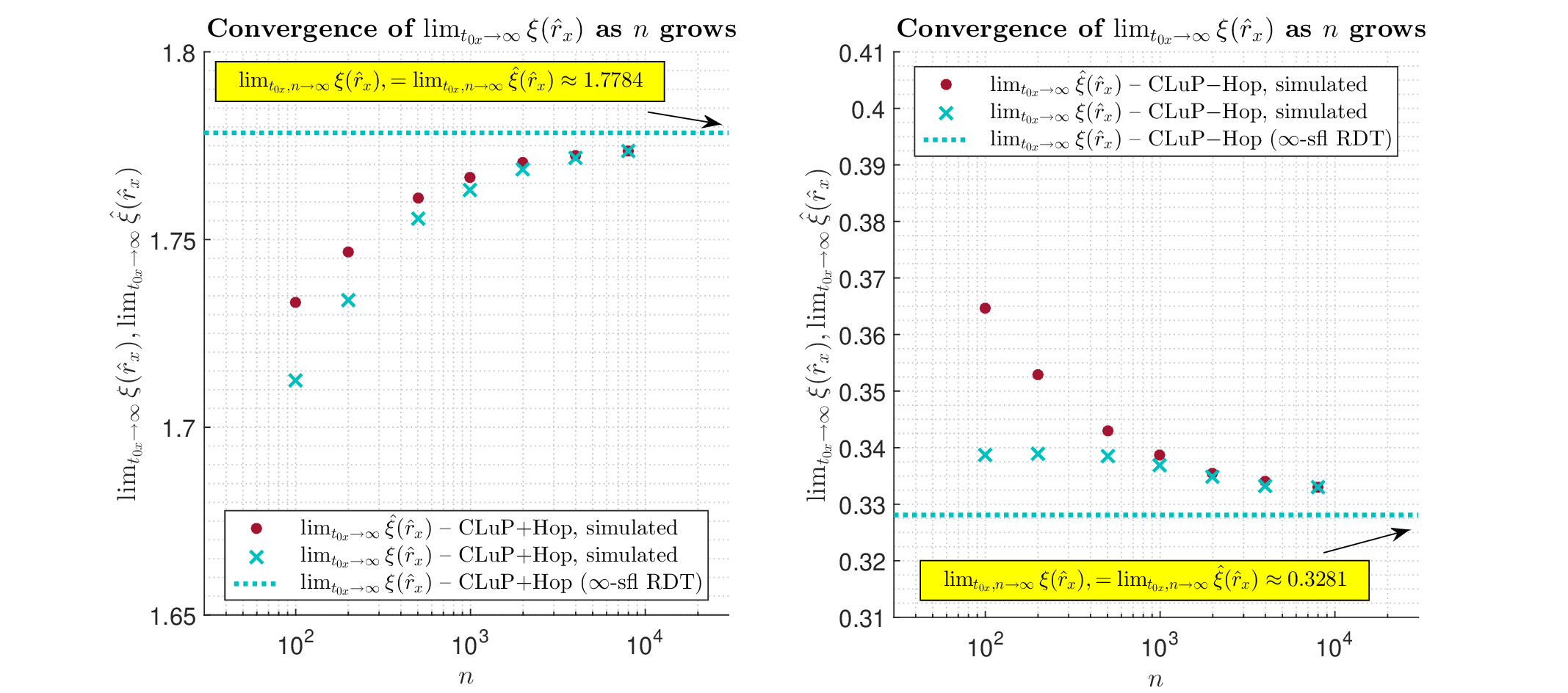}}
\caption{Convergence of $\lim_{t_{0x}\rightarrow\infty} \xi(\hat{r}_x)$ and $\lim_{t_{0x}\rightarrow\infty} \hat{\xi}(\hat{r}_x)$ as $n$ grows;  $+$Hop (left) and $-$Hop (right) }
\label{fig:figconv2}
\end{figure}

\noindent \underline{\textbf{\emph{Concentrations}}}

Closely related to the above convergence are the concentration properties. Theory predicts that for $n\rightarrow\infty$ all key underlying quantities rapidly concentrate. Figure \ref{fig:figconc} shows a finite dimensional emulation of concentration effects. In the upper portion of the figure we have $+$Hop and in the lower portion $-$Hop concentration results. For both scenarios we have taken $n=200$ and $m=4000$ as representatives of smaller and larger $n$. One observes a rapid onset of concentrations effects with standard deviations dropping $7-8$ times for the given $n$ span. Moreover, we superimposed Gaussian densities with means and standard deviations that match the ones obtained through  simulations. As figure shows, they remarkably well fit the simulated histograms.

\begin{figure}[h]
\begin{minipage}[b]{1\linewidth}
\centering
\centerline{\includegraphics[height=2.7in,width=1\linewidth]{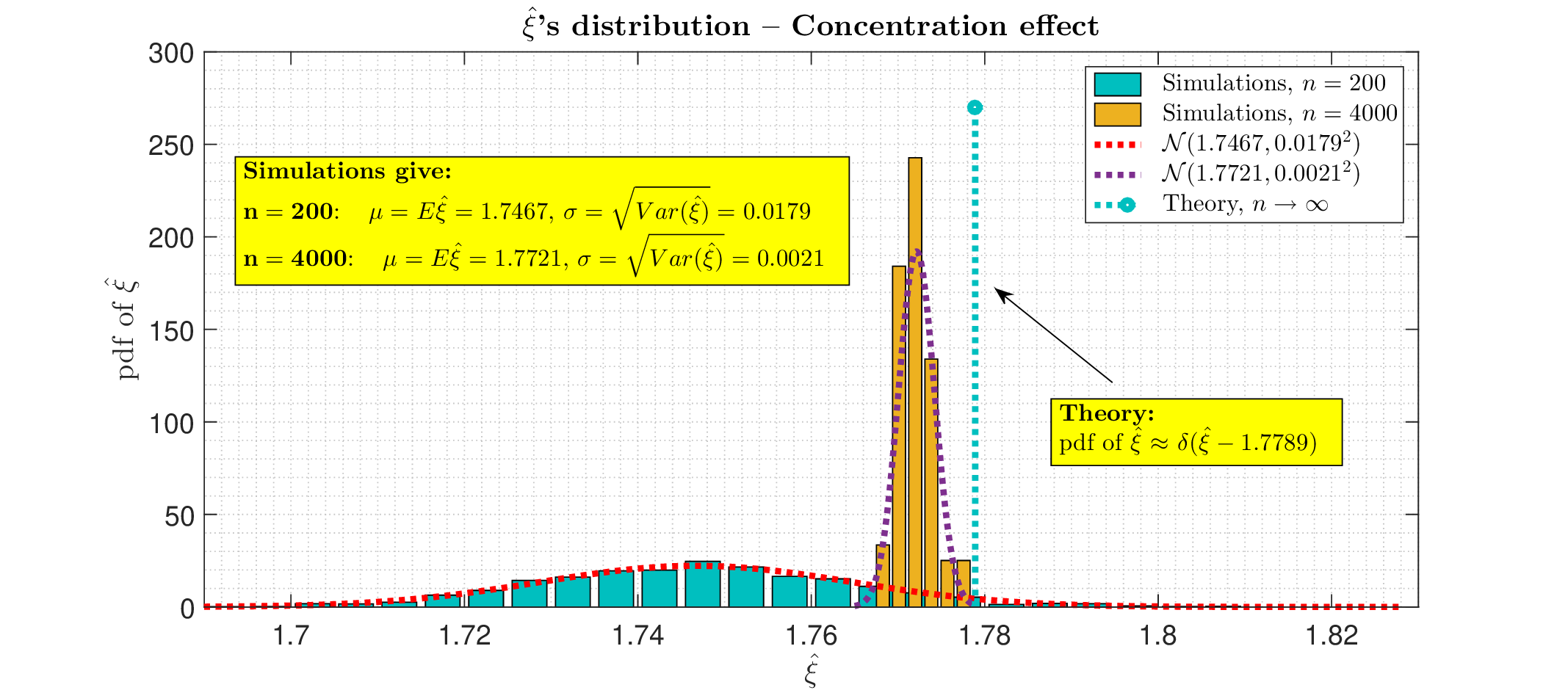}}
\end{minipage}
\begin{minipage}[b]{1\linewidth}
\centering
\centerline{\includegraphics[height=2.7in,width=1\linewidth]{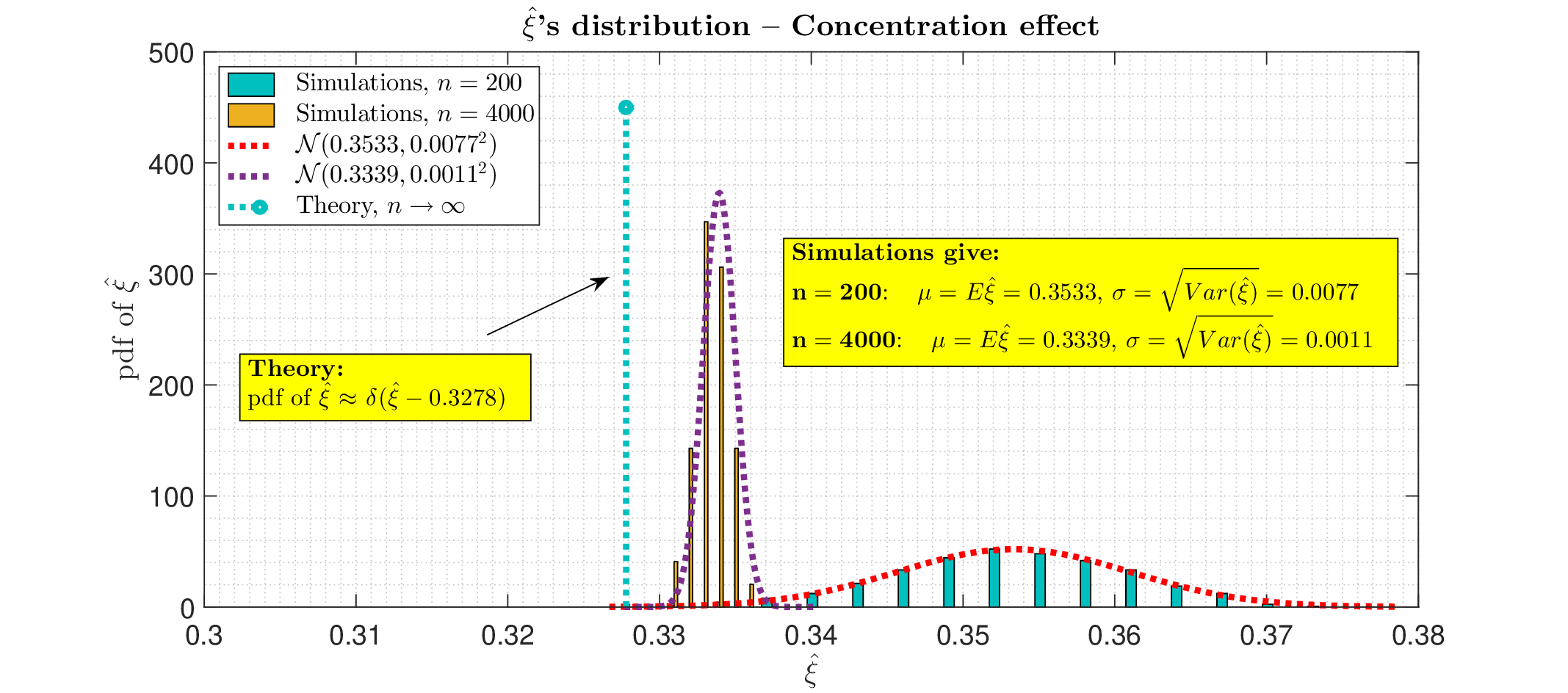}}
\end{minipage}
\caption{Concentration effect;  $+$Hop (top) and $-$Hop (bottom)}
\label{fig:figconc}
\end{figure}

\noindent \underline{\textbf{\emph{Objective shape as a function of $r_x$}}}

In Figure \ref{fig:fig9a}, we show the scaled objective $\frac{\bar{f}_{b} \lp r_x \rp }{t_{0x}}$ for a concrete  $t_{0x} =20$.  For both $+$Hop and $-$Hop we observe that $\frac{\bar{f}_{b} \lp r_x \rp }{t_{0x}}$  has no local optima. We also repeated the very same test across a wide range of $t_{0x}$ and observed the same trend. Such an absence of objective's local optima is a necessary condition for successful generic running of descending algorithms. To what degree (if any) intrinsic features other than objective landscape impact performance of descending algorithms  remains to be seen. Studying potential presence/absence of ``near optimal'' solutions clustering organization might be interesting  next steps in these directions. In particular, overlap gap properties (OGP) \cite{Gamar21,GamarSud14,GamarSud17,GamarSud17a,AchlioptasCR11,HMMZ08,MMZ05} and local entropies (LE) \cite{Bald15,Bald16,Bald20} are two clustering related concepts predominantly present in recent literature.  We should also add that the shape of the objective critically depends on the accuracy of the underlying numerical evaluations. As mentioned earlier, all our evaluations are done on the third partial level of lifting. Tiny corrections do appear on higher lifting levels but remain visually almost undetectable and without significant impact on objective's shape.

\begin{figure}[h]
\centering
\centerline{\includegraphics[width=1\linewidth]{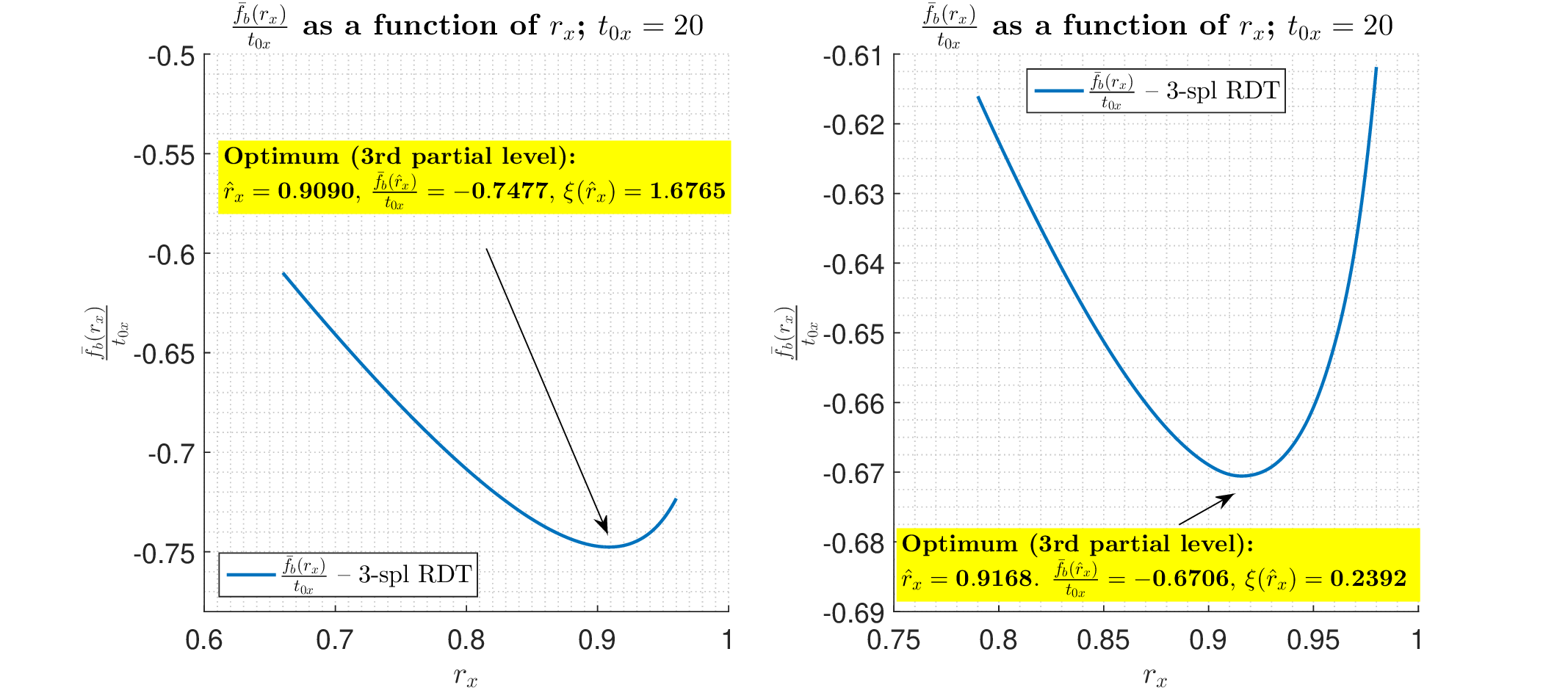}}
\caption{Landscape as a function of $r_x$;  $+$Hop (left) and $-$Hop (right)}
\label{fig:fig9a}
\end{figure}

\noindent \underline{\textbf{\emph{Restarting/retuning}}}

As mentioned earlier, all our results are obtained for basic $\pm$Hop implementations. Even though the obtained results are well above the best of hopes, significant improvement in certain scenarios can be achieved with fairly minimal modifications/adaptations. For example, just restarting with different randomly chosen initial configurations already  helps a lot. Retuning $c^{(t)}$ as the algorithm progresses provides an additional help as well. To give a flavor as to what kind of improvement one can expect, we in Table \ref{tab:tabrestart} show the impact of restarting+retuning modification on $-$Hop (in our experience $-$Hop seems to benefit the most from additional modifications). We limited number of restarts to 100 (even though 20 was typically sufficient) and retuned $c^{(t)}$ as $c^{(t)}=\mbox{Unif}[1,1.3]$. We selected two scenarios $n=500$ and $n=2000$ to emphasize that a strong improvement  is possible across a range of $n$ including those on the order of few thousands where the theoretical limits are already being approached fairly closely.

\begin{table}[h]
\caption{Restarting/retuning effects on CLuP$-$Hop algorithm -- \bl{\textbf{plain}}/\prp{\textbf{restart+retune}}  }\vspace{.1in}
\centering
\def\arraystretch{1.2}
\begin{tabular}{||l||c||c||c||c|| }\hline\hline
 \hspace{-0in}$n$                                             & $500$    & $2000$ &   $\infty$ (\textbf{theory}) \\ \hline\hline
$\hat{\xi}$ ($-$Hop)                                         & \bl{$\mathbf{0.3430}$}  & \bl{$\mathbf{0.3355}$} & $\mathbf{0.3281}$  \\ \hline\hline
$\hat{\xi}$ ($-$Hop)                                         & \prp{$\mathbf{0.3358}$}  & \prp{$\mathbf{0.3330}$} & $\mathbf{0.3281}$  \\ \hline\hline
\end{tabular}
\label{tab:tabrestart}
\end{table}

\noindent \underline{\textbf{\emph{Overlaps}}}

Excellent CLuP$\pm$Hop performance  allows simulation of near optimal solutions -- configurations that produce energies in the vicinity of the optimal ground state one. Of particular interest are the overlaps structures together with associated GIbbs measures. In the thermodynamic limit the key components of the Gibbs measures concentrate on $\p_2,\p_3,\dots,\p_r$ and $\q_2,\q_3,\dots,\q_r$. In Figure \ref{fig:fig8} and \ref{fig:fig9}  we show how the structure of $\p$ and $\q$ changes as lifting progresses. Visual presentation is somewhat facilitated if one associates with $\p$, $\q$, and $\c$ the following
\begin{eqnarray}\label{eq:ultmet1}
\bl{\mbox{\textbf{$\p\lp \frac{\c}{\c_2}\rp $ map:}}}  \hspace{.2in} & & \p_2 \leftrightarrow   \left [ \frac{\c_3}{\c_2}, \frac{\c_2}{\c_2} \right ],\quad
\p_3  \leftrightarrow  \left [ \frac{\c_4}{\c_2}, \frac{\c_3}{\c_2}  \right ], \quad \dots,
\nonumber \\
\bl{\mbox{\textbf{$\q\lp \frac{\c}{\c_2}\rp $ map:}}}  \hspace{.2in} & & \q_2 \leftrightarrow   \left [ \frac{\c_3}{\c_2}, \frac{\c_2}{\c_2} \right ],\quad
\q_3  \leftrightarrow  \left [ \frac{\c_4}{\c_2}, \frac{\c_3}{\c_2}  \right ], \quad \dots,
\end{eqnarray}
for $+$Hop model and
\begin{eqnarray}\label{eq:ultmet1a0}
\bl{\mbox{\textbf{$\p \lp \frac{\c}{\c_{\infty}}\rp $ map:}}}  \hspace{.2in}   & & \p_2 \leftrightarrow   \left [ \frac{\c_2}{\c_{\infty}}, \frac{\c_3}{\c_{\infty}} \right ],\quad
\p_3  \leftrightarrow  \left [ \frac{\c_3}{\c_{\infty}}, \frac{\c_4}{\c_{\infty}}  \right ], \quad \dots, \nonumber \\
\bl{\mbox{\textbf{$\q\lp \frac{\c}{\c_{\infty}}\rp $ map:}}}  \hspace{.2in}  & &  \q_2 \leftrightarrow   \left [ \frac{\c_2}{\c_{\infty}}, \frac{\c_3}{\c_{\infty}} \right ],\quad
\q_3  \leftrightarrow  \left [ \frac{\c_3}{\c_{\infty}}, \frac{\c_4}{\c_{\infty}}  \right ], \quad \dots.
\end{eqnarray}
for $-$Hop model. Concrete numerical values for $\p$, $\q$, and $\c$ up to the 6th lifting level are given in Tables \ref{tab:2rsbunifiedsqrtpos} and \ref{tab:NEG2rsbunifiedsqrtpos} \cite{Stojnichopflrdt23} (Table  \ref{tab:2rsbunifiedsqrtpos} relates to $+$Hop and Table \ref{tab:NEG2rsbunifiedsqrtpos}  to $+$Hop model; in addition to $\p$, $\q$, and $\c$ the $r$th level values of $f_{sq}^{+}(\infty)$ and $f_{sq}^{-}(\infty)$, $f_{sq}^{+,r}(\infty)$ and $f_{sq}^{-,r}(\infty)$ , are given as well; as stated above, results from the tables are also visualized in
Figures \ref{fig:fig8} and \ref{fig:fig9}). Moreover,  $\p(\cdot)$ and $\q(\cdot)$ maps are complemented with simulated near optimal configurations overlaps  (we ran randomly restarted algorithm's variant). The simulated distributions fairly closely match the $\p$ and $\q$ Gibbs measures cdfs. Also, comparing Figures \ref{fig:fig8} and \ref{fig:fig9} one observes a fundamentally different behavior of $+$Hop and $-$Hop  overlaps. Both $\p$ and $\q$ $+$Hop overlaps are typically close to one which means that near optimal solutions $\x$ (as well as their corresponding $\y$'s) are ``\emph{typically close}'' to each other. On the other hand,  in $-$Hop model they are ``\emph{typically far away}'' and usually almost orthogonal to each other. We should add that due to the dependence on the choice of near optimality (we have taken $\pm.003$ as allowed deviation from the algorithm's best solution) simulated results are more an indication than accurate value (simulating exact values seems as a conceptually rather hard task).
\begin{table}[h]
\caption{$r$-sfl RDT parameters; $+$Hop model; $\alpha=1$; $\hat{\c}_1\rightarrow 1$; $n,\beta\rightarrow\infty$}\vspace{.1in}
\centering
\def\arraystretch{1.2}
\begin{tabular}{||l||c||c||c||c||c||}\hline\hline
 \hspace{-0in}$r$                                             & $\hat{\gamma}_{sq}$    & $\hspace{-.05in}\begin{bmatrix}\hat{\p}_{r-1}  & \hat{\p}_{r-2} & \dots & \hat{\p}_{1}    \end{bmatrix}^{T^{\big.}}_{\big.} \hspace{-.05in}$        & $\hspace{-.05in}\begin{bmatrix}\hat{\q}_{r-1}  & \hat{\q}_{r-2} & \dots & \hat{\q}_{1}    \end{bmatrix}^{T^{\big.}}_{\big.} \hspace{-.05in}$ &  $\hspace{-.05in}\begin{bmatrix}\hat{\c}_{r}  & \hat{\c}_{r-1} & \dots & \hat{\c}_{2}    \end{bmatrix}^{T^{\big.}}_{\big.} \hspace{-.05in}$    & $f_{sq}^{+,r}$  \\ \hline\hline
  $\mathbf{2}$                                       & $0.6173$ & $\begin{bmatrix} \rightarrow 1\end{bmatrix}_{\big.}^{\big.}$  & $\begin{bmatrix}  \rightarrow 1\end{bmatrix}_{\big.}^{\big.}$  &  $ \begin{bmatrix}0.4246\end{bmatrix}_{\big.}^{\big.}$   & \bl{$\mathbf{1.7832}$} \\ \hline
  $\mathbf{3}$                                       & $0.7434$ & $\begin{bmatrix} 0.7510  \\ \rightarrow 1 \end{bmatrix}_{\big.}^{\big.}$ &  $\begin{bmatrix} 0.8397 \\ \rightarrow 1\end{bmatrix}$
   & $\begin{bmatrix} 0.2508 \\ 1.7762  \end{bmatrix}$ & \bl{$\mathbf{1.7791}$}  \\ \hline
  $\mathbf{4}$                                       & $0.8024$ & $\begin{bmatrix} 0.5900 \\ 0.9097 \\ \rightarrow 1 \end{bmatrix}_{\big.}^{\big.}$ & $\begin{bmatrix} 0.6927 \\ 0.9546 \\ \rightarrow 1\end{bmatrix}$
   &  $\begin{bmatrix} 0.1829 \\ 0.8349 \\3.9792 \end{bmatrix}$   & \bl{$\mathbf{1.77859}$}  \\ \hline
  $\mathbf{5}$                                       & $0.8203$ & $\begin{bmatrix} 0.4230 \\ 0.97700\\  0.9473 \\ \rightarrow 1 \end{bmatrix}_{\big.}^{\big.}$ & $\begin{bmatrix} 0.5205 \\ 0.8556 \\0.9762 \\ \rightarrow 1\end{bmatrix}$
   &  $\begin{bmatrix} 0.1214 \\ 0.4705 \\ 1.4078 \\5.6138 \end{bmatrix}$   & \bl{$\mathbf{1.77846}$}  \\ \hline
  $\mathbf{6}$                                       & $0.8235$ & $\begin{bmatrix} 0.3030 \\ 0.6132 \\ 0.8398\\  0.9602\\ \rightarrow 1 \end{bmatrix}_{\big.}^{\big.}$ & $\begin{bmatrix} 0.3786\\ 0.7141 \\ 0.9093 \\0.9833 \\ \rightarrow 1\end{bmatrix}$
   &  $\begin{bmatrix} 0.0883 \\ 0.3041 \\ 0.7092 \\ 1.8885 \\6.3850 \end{bmatrix}$   & \bl{$\mathbf{1.77842}$}  \\ \hline \hline
\end{tabular}
\label{tab:2rsbunifiedsqrtpos}
\end{table}

\begin{table}[h]
\caption{$r$-sfl RDT parameters; $-$Hop model; $\alpha=1$; $\hat{\c}_1\rightarrow 1$; $n,\beta\rightarrow\infty$}\vspace{.1in}
\centering
\def\arraystretch{1.2}
\begin{tabular}{||l||c||c||c||c||c||}\hline\hline
 \hspace{-0in}$r$                                             & $\hat{\gamma}_{sq}$    & $\hspace{-.05in}\begin{bmatrix}\hat{\p}_{r-1}  & \hat{\p}_{r-2} & \dots & \hat{\p}_{1}    \end{bmatrix}^{T^{\big.}}_{\big.} \hspace{-.05in}$        & $\hspace{-.05in}\begin{bmatrix}\hat{\q}_{r-1}  & \hat{\q}_{r-2} & \dots & \hat{\q}_{1}    \end{bmatrix}^{T^{\big.}}_{\big.} \hspace{-.05in}$ &  $\hspace{-.05in}\begin{bmatrix}\hat{\c}_{r}  & \hat{\c}_{r-1} & \dots & \hat{\c}_{2}    \end{bmatrix}^{T^{\big.}}_{\big.} \hspace{-.05in}$    & $f_{sq}^{-,r}$  \\ \hline\hline
  $\mathbf{2}$                                       & $0.1654$ & $\begin{bmatrix} \rightarrow 1\end{bmatrix}_{\big.}^{\big.}$  & $\begin{bmatrix}  \rightarrow 1\end{bmatrix}_{\big.}^{\big.}$  &  $ \begin{bmatrix}2.6916\end{bmatrix}_{\big.}^{\big.}$   & \bl{$\mathbf{0.3202}$} \\ \hline
  $\mathbf{3}$                                       & $0.1748$ & $\begin{bmatrix} 0.5722  \\ \rightarrow 1 \end{bmatrix}_{\big.}^{\big.}$ &  $\begin{bmatrix} 0.0599 \\ \rightarrow 1\end{bmatrix}$
   & $\begin{bmatrix} 10.548 \\ 2.2264  \end{bmatrix}$ & \bl{$\mathbf{0.3272}$}  \\ \hline
  $\mathbf{4}$                                       & $0.1766$ & $\begin{bmatrix} 0.3639 \\ 0.6836 \\ \rightarrow 1 \end{bmatrix}_{\big.}^{\big.}$ & $\begin{bmatrix} 0.0107 \\ 0.1282 \\ \rightarrow 1\end{bmatrix}$
   &  $\begin{bmatrix} 27.9800 \\ 5.0700 \\ 2.1306 \end{bmatrix}$   & \bl{$\mathbf{0.3279}$}  \\ \hline
  $\mathbf{5}$                                       & $0.1776$ & $\begin{bmatrix} 0.2640 \\ 0.5137 \\  0.7307 \\ \rightarrow 1 \end{bmatrix}_{\big.}^{\big.}$ & $\begin{bmatrix} 0.0035 \\ 0.0387 \\0.1742 \\ \rightarrow 1\end{bmatrix}$
   &  $\begin{bmatrix} 54.3693 \\ 9.3081 \\ 3.7962 \\ 2.1046 \end{bmatrix}$   & \bl{$\mathbf{0.32803}$}  \\ \hline
  $\mathbf{6}$                                       & $0.1774$ & $\begin{bmatrix} 0.2040 \\ 0.3982 \\ 0.5835\\  0.7556\\ \rightarrow 1 \end{bmatrix}_{\big.}^{\big.}$ & $\begin{bmatrix} 0.0016\\ 0.0148 \\ 0.0648 \\0.2036 \\ \rightarrow 1\end{bmatrix}$
   &  $\begin{bmatrix} 85.9067 \\ 15.8758 \\ 6.2406 \\ 3.2936 \\ 2.0872 \end{bmatrix}$   & \bl{$\mathbf{0.32807}$}  \\ \hline \hline
\end{tabular}
\label{tab:NEG2rsbunifiedsqrtpos}
\end{table}

\begin{figure}[h]
\centering
\centerline{\includegraphics[width=1.00\linewidth]{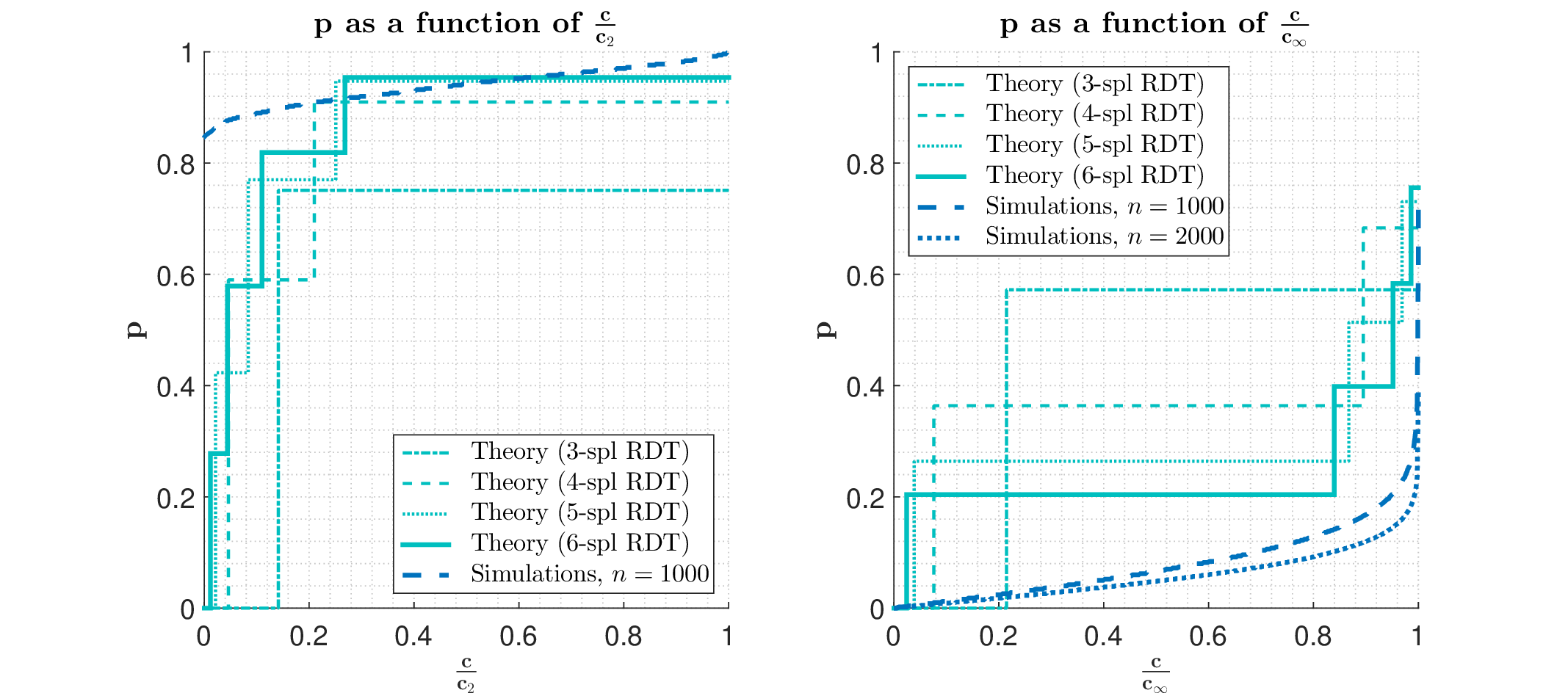} }
\caption{$\p$ overlaps;  $+$Hop (left) and $-$Hop (right)}
\label{fig:fig8}
\end{figure}

\begin{figure}[h]
\centering
\centerline{\includegraphics[width=1.00\linewidth]{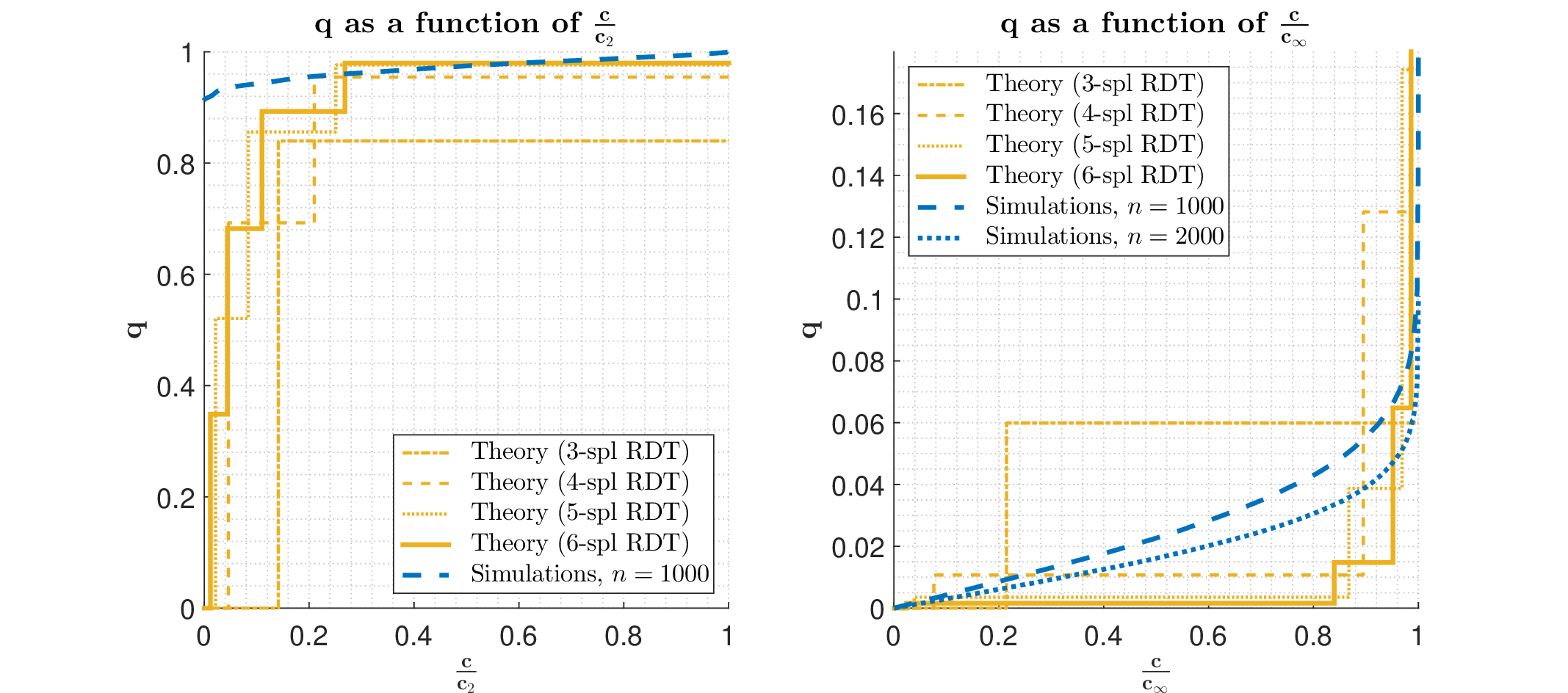} }
\caption{$\q$ overlaps;  $+$Hop (left) and $-$Hop (right)}
\label{fig:fig9}
\end{figure}

We complement $\pm$Hop models overlaps results with the corresponding SK ones. In Table  \ref{tab:7rsbSK} we show the SK lifting progression up to the 7th level. Results from Table  \ref{tab:7rsbSK}  are visualized in Figure \ref{fig:fig10} (since there is a rather large number of numerical evaluations increasing their level of precision might slightly change the 6th and 7th level numbers in Table  \ref{tab:7rsbSK}; however, shape and location of the corresponding curves in Figure \ref{fig:fig10} as well as $f_{csk}^{(r)}$ (the $r$th level SK ground state free energy) are not expected to significantly change). It is interesting to note that SK overlaps cdf is much more similar to $+$Hop than to $-$Hop. Furthermore, the SK overlap cdf seems to be higher than the simple approximation given in \cite{OppShe05} (denoted as $\infty$-RSB estimate in Figure  \ref{fig:fig10}). On the other hand, we obtain $f_{csk}^{(7)}(\infty)\approx 0.76319$ as the SK model ground state free energy on the 7th lifting level. This indicates that $0.76321\pm 0.00003$  prediction of  \cite{CrisRizo02} and $\approx0.76317$ prediction of    \cite{OppSch08,OppSS07} are indeed close to the true value. The simulated SK overlap values are shown as well  (we ran restarted algorithm's variant together with random deviations in $t_{0x}$). One should note though, that (compared to $\pm$Hop) overlaps simulations in SK case seem more sensitive to the choice of near optimality and as such should be taken even more cautiously as an indication of the true behavior. Nonetheless, the trend in the bulk of the distribution  is very much in alignment with the theoretical prediction.

\begin{table}[h]
\caption{$r$-sfl RDT parameters; SK model;  $\hat{\q}_1,\hat{\c}_1\rightarrow 1$; $n,\beta\rightarrow\infty$}\vspace{.1in}
\centering
\def\arraystretch{1.2}
{\footnotesize
\begin{tabular}{||l||c|c|c|c|c||c|c|c|c|c|c||c||}\hline\hline
 \hspace{-0in}$r$                                              & $\hat{\q}_5$     & $\hat{\q}_4$   &   $\hat{\q}_4$  & $\hat{\q}_3$  & $\hat{\q}_2$  & $\hat{\c}_5$ & $\hat{\c}_4$ &  $\hat{\c}_5$   &  $\hat{\c}_4$   &   $\hat{\c}_3$   &   $\hat{\c}_2$    & $f_{csk}^{(r)}(\infty)$  \\ \hline\hline
 $\mathbf{2}$                                       & $0$  & $0$ & $0$ & $0$ & $0$ & $\rightarrow 0$ & $\rightarrow 0$ &  $\rightarrow 0$
 &  $\rightarrow 0$ &  $\rightarrow 0$  &  $   0.5779 $  & \bl{$\mathbf{0.76883}$} \\ \hline
   $\mathbf{3}$                                      & $0$  & $0$ & $0$  & $0$ & $0.7434$ &  $\rightarrow 0$  & $\rightarrow 0$ &  $\rightarrow 0$
 &  $\rightarrow 0$ &  $0.3569$
 &  $1.4586$   & \bl{$\mathbf{0.76403}$}   \\ \hline
  $\mathbf{4}$                                       & $0$ & $0$ & $0$  & $ 0.5587$ & $ 0.9088$ & $\rightarrow 0$  & $\rightarrow 0$ &  $\rightarrow 0$
 &  $0.2599$ &  $  0.8280$
 &  $ 2.5103$   & \bl{$\mathbf{0.76341}$}   \\ \hline
 $\mathbf{5}$                                      & $0$ & $0$ & $  0.4421$  & $ 0.7912 $ & $ 0.9588$ & $\rightarrow 0$ & $\rightarrow 0$  & $0.2048 $  &  $0.6104$ &  $1.3026$
 &  $ 3.7571$   & \bl{$\mathbf{0.76326}$}   \\ \hline
  $\mathbf{6}$                                      & $0$ & $0.3708  $ & $  0.6952  $  & $ 0.8939 $ & $ 0.9796 $ & $\rightarrow 0$ & $0.1718$   & $0.5027  $  &  $0.9485 $   &  $1.8802  $
 &  $ 5.3556 $   & \bl{$\mathbf{0.76321}$}   \\ \hline
  $\mathbf{7}$                                      &  $ \hspace{-.03in} 0.2854 \hspace{-.03in} $  & $0.5363  $ & $  0.7621  $  & $ 0.9133 $ & $ 0.9829  $ & $ 0.1381   $ & $0.3752 $   & $0.6383  $  &  $1.08235 $   &  $2.0709  $
 &  $ 5.8472 $   & \bl{$\mathbf{0.76319}$}   \\ \hline
 \hline
\end{tabular}
}
\label{tab:7rsbSK}
\end{table}

\begin{figure}[h]
\centering
\centerline{\includegraphics[width=1.00\linewidth]{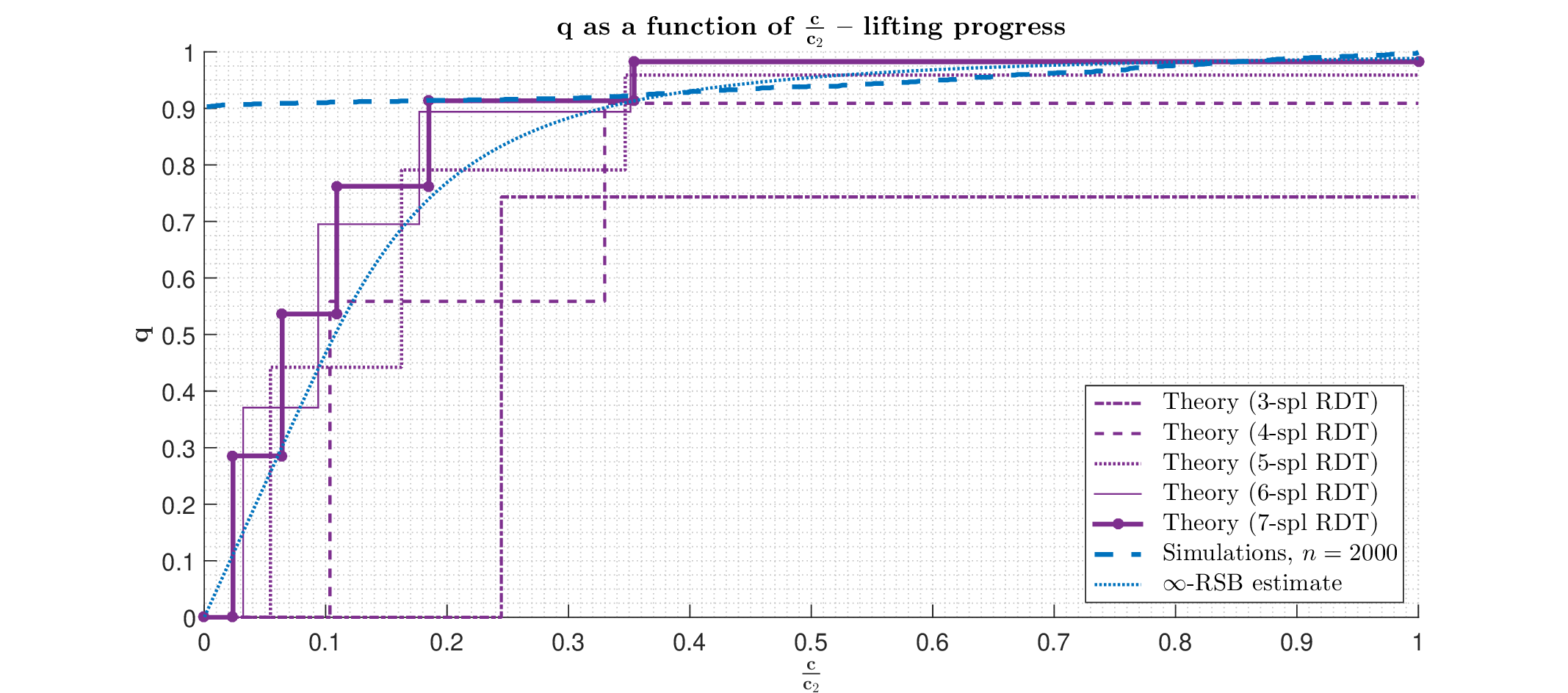} }
\caption{$\q$ overlap;  SK model}
\label{fig:fig10}
\end{figure}

\section{Conclusion}
\label{sec:conc}

We studied algorithmic aspects of \emph{positive} and \emph{negative} Hopfield ($\pm$Hop) models and focused on determining their ground state free energies. These problems are equivalent to binary  maximization of random positive/negative semi-definite quadratic forms. Their indefinite quadratic form analogue is the celebrated SK model. Following the success of \emph{Controlled Loosening-up} (CLuP-SK) algorithms in computing near ground state free energies of SK models \cite{Stojnicclupsk25}, we here proposed CLuP$\pm$Hop analogues for $\pm$Hop models. An excellent performance is observed already  for $n$  as small as few thousands. In particular, we have that: (i)  CLuP+Hop achieves $\sim 1.77$ as the ground state free energy of the positive model; and (ii)  CLuP-Hop achieves $\sim 0.33$ as the corresponding energy of the negative model. Both of these closely approach the theoretical thermodynamic ($n\rightarrow\infty$) limits $\approx 1.7784$ and $\approx 0.3281$ and position computation of near ground state free energies of  $\pm$Hop models as \emph{typically} easy problems.

To analyze the introduced algorithms we associated the CLuP$\pm$Hop models and utilized Fully lifted random duality theory (fl RDT) \cite{Stojnicflrdt23} to study them. A generic analytical framework was developed and employed to characterize the entire dynamics of the proposed algorithms. Already on the third (partial) level of lifting (3-spf RDT) we obtained almost identical match between the theoretical predictions and algorithmically simulated results. Excellent algorithmic performance and generality of the analytical concepts allowed to uncover a host of interesting features as well. Strong convergence and concentration properties are observed (already  for $n\sim 1000$) and a favorable -- no local optima -- landscape of the underlying algorithmic objective is uncovered.

Lifting up to the 3rd level is usually sufficient to obtain very precise characterizations of almost any associated performance measure. One notable exception is the characterization of the configurational overlaps and their GIbbs measures. To handle these much higher levels of lifting are needed. We conducted evaluations up to the 6th lifting level (6-spl RDT) and along the way uncovered a remarkable fundamental intrinsic difference between $+$Hop and $-$Hop models. Typical near optimal configurations are \emph{close} to each other for $+$Hop and \emph{far away} for $-$Hop model.

The introduced concepts  are very generic and allow for many extensions as well. Developing CLuP analogues for various models discussed in \cite{Stojnictcmspnncapliftedrdt23,Stojnicnflgscompyx23,Stojnicsflgscompyx23,Stojnicflrdt23} forms just a small fraction of practically endless possibilities. In addition to relying on the main concepts presented here and in \cite{Stojnicclupsk25,Stojnicclupint19,Stojnicclupspreg20}, these extensions also require a bit of technical adjustments that are usually problem specific. We discuss these in separate papers.

\begin{singlespace}
\bibliographystyle{plain}
\bibliography{nflgscompyxRefs}
\end{singlespace}

\end{document}